\renewcommand{\paragraph}{%
  \@startsection{paragraph}{4}%
  {\z@}{1ex \@plus 1ex \@minus .2ex}{-1em}%
  {\normalfont\normalsize\bfseries}%
}
\theoremstyle{plain}
\newtheorem{thm}{\protect\theoremname}
\theoremstyle{plain}
\theoremstyle{definition}
\theoremstyle{plain}
\newtheorem{cor}[]{\protect\corollaryname}
\theoremstyle{plain}
\theoremstyle{plain}
\theoremstyle{plain}
\newtheorem{lem}[]{\protect\lemmaname}
\theoremstyle{definition}
\newtheorem{rem}[]{\protect\remarkname}
\providecommand{\algorithmname}{Algorithm}
\providecommand{\assumptionname}{Assumption}
\providecommand{\corollaryname}{Corollary}
\providecommand{\definitionname}{Definition}
\providecommand{\propositionname}{Proposition}
\providecommand{\theoremname}{Theorem}
\providecommand{\remarkname}{Remark}
\providecommand{\lemmaname}{Lemma}
\newcommand{\footremember}[2]{%
\footnote{#2}
\newcounter{#1}
\setcounter{#1}{\value{footnote}}%
}
\newcommand{\footrecall}[1]{%
\footnotemark[\value{#1}]%
}
\begin{document}
\global\long\def\CC{\mathbb{C}}
\global\long\def\SS{S^{1}}
\global\long\def\RR{\mathbb{R}}
\global\long\def\actson{\curvearrowright}
\global\long\def\ra{\rightarrow}
\global\long\def\z{\mathbf{z}}
\global\long\def\ZZ{\mathbb{Z}}
\global\long\def\NN{\mathbb{N}}
\global\long\def\sgn{\mathrm{sgn}\:}
\global\long\def\RRpos{\RR_{>0}}
\global\long\def\var{\mathrm{var}}
\global\long\def\circint{\int_{-\pi}^{\pi}}
\global\long\def\F{\mathcal{F}}
\global\long\def\pb#1{\langle#1\rangle}
\global\long\def\op{\mathrm{op}}
\global\long\def\Op{\mathrm{op}}
\global\long\def\supp{\mathrm{supp}}
\global\long\def\ceil#1{\lceil#1\rceil}
\global\long\def\TV{\mathrm{TV}}
\global\long\def\floor#1{\lfloor#1\rfloor}
\global\long\def\vt{\vartheta}
\global\long\def\vp{\varphi}
\global\long\def\class#1{[#1]}
\global\long\def\of{(\cdot)}
\global\long\def\one{\mathbb{1}}
\global\long\def\cov{\mathrm{cov}}
\global\long\def\CC{\mathbb{C}}
\global\long\def\SS{S^{1}}
\global\long\def\RR{\mathbb{R}}
\global\long\def\actson{\curvearrowright}
\global\long\def\ra{\rightarrow}
\global\long\def\z{\mathbf{z}}
\global\long\def\ZZ{\mathbb{Z}}
\global\long\def\h{\mu}
\global\long\def\convr{*_{\RR}}
\global\long\def\x{\mathbf{x}}
\global\long\def\ve{\epsilon}
\global\long\def\cv{\mathfrak{c}}
\global\long\def\wh#1{\hat{#1}}
\global\long\def\norm#1{\left|\left|#1\right|\right|}
\global\long\def\Tmean{T}
\global\long\def\Tslope{m}
\global\long\def\degC#1{#1^{\circ}\mathrm{C}}
\global\long\def\X{\mathbf{X}}
\global\long\def\bbeta{\boldsymbol{\beta}}
\global\long\def\b{\mathbf{b}}
\global\long\def\Y{\mathbf{Y}}
\global\long\def\H{\mathbf{H}}
\global\long\def\e{\bm{\epsilon}}
\global\long\def\s{\mathbf{s}}
\global\long\def\t{\mathbf{t}}
\global\long\def\R{\mathbf{R}}
\global\long\def\dAc{\partial A_c}
\global\long\def\cl{\mathrm{cl}}
\global\long\def\Hoe{\text{H\"older}}
\global\long\def\bb#1{\mathbb{#1}}
\global\long\def\bX{\bm{X}}
\global\long\def\bY{\bm{Y}}
\global\long\def\D{\mathfrak{D}}
\global\long\def\dGH{d_{\mathcal{GH}}}
\global\long\def\dWp{d_{\mathcal{W},p}}
\global\long\def\umu{\underline{\mu}}
\global\long\def\FLB{\mathbf{FLB}}
\global\long\def\dX{\mathbf{d_\bX}}
\global\long\def\dY{\mathbf{d_\bY}}
\global\long\def\mX{{\mu_\bX}}
\global\long\def\mY{{\mu_\bY}}	
\global\long\def\cal#1{\mathbf{\mathcal{#1}}}
\global\long\def\UXn{U_{\bX n}}
\global\long\def\dXX{{\Delta_{\bX}}}
\global\long\def\dYY{{\Delta_{\bY}}}
\global\long\def\d{\:\mathrm{d}}
\global\long\def\Ds{\mathfrak{d}}
\global\long\def\ph{\hat{p}}
\global\long\def\muh{\hat{\mu}}
\global\long\def\onevec{\one}
\global\long\def\T{\mathcal{T}}
\global\long\def\root{\mathrm{root}}
\global\long\def\parent{\mathrm{parent}}
\global\long\def\children{\mathrm{children}}
\global\long\def\scp#1#2{\langle #1, #2 \rangle}
\global\long\def\X{\mathcal{X}}
\global\long\def\br{\bm{r}}
\global\long\def\bs{\bm{s}}
\global\long\def\brh{\hat{\bm{r}}}
\global\long\def\bsh{\hat{\bm{s}}}
\global\long\def\rh{\hat{r}}
\global\long\def\sh{\hat{s}}
\global\long\def\bu{{\bm{u}}}
\global\long\def\bv{\bm{v}}
\global\long\def\bh{\bm{h}}
\global\long\def\bw{\bm{w}}
\global\long\def\bD{{\bm{D}}}
\global\long\def\bx{\bm{x}}
\global\long\def\by{\bm{y}}
\global\long\def\bZ{\bm{Z}}
\global\long\def\WhnmB{\hat{W}^{(S)}}
\global\long\def\smax{{\sigma_{\mathrm{max}}}}
\global\long\def\sT{\sigma_\T}
\global\long\def\bsT{\bm{\sigma}_\T}
\global\long\def\E{\mathcal{E}}
\global\long\def\V{\mathcal{V}}
\global\long\def\cmax{c_{\mathrm{max}}}
\global\long\def\h{h}
\global\long\def\Id{\mathrm{Id}}
\global\long\def\N{\mathcal{N}}
\global\long\def\diam{\:\mathrm{diam}}
\global\long\def\ns{{n^{*}}}
\global\long\def\ps{{p^{*}}}
\global\long\def\Wt{W^{(t)}}
\global\long\def\BL1{\mathrm{BL}_1}
\global\long\def\I{\mathcal{I}}
\global\long\def\simD{\stackrel{D}{\sim}}
\global\long\def\d{\mathrm{d}}
\global\long\def\dpp{d^p\!}

\title{Inference for empirical Wasserstein distances on finite spaces}

\author{Max Sommerfeld\footremember{fbms}{\scriptsize Felix Bernstein Institute
  for
  Mathematical Statistics in the Biosciences and Institute for Mathematical
  Stochastics, University of G\"ottingen,
  Goldschmidtstra{\ss}e 7, 37077 G\"ottingen}
  \and 
  Axel Munk\footrecall{fbms} \footnote{\scriptsize Max Planck Institute for
    Biophysical
      Chemistry, Am Fa{\ss}berg 11, 37077 G\"ottingen}
       }

       \date{}

\maketitle 
 
\begin{abstract}
  The Wasserstein distance is an attractive
  tool for data analysis 
  but statistical 
  inference is hindered by the lack of distributional limits.  To overcome this
  obstacle, for  probability measures
  supported on finitely many points, we derive the asymptotic distribution of
  empirical  Wasserstein distances as the optimal value of a
  linear program with random objective function.  This facilitates statistical
  inference (e.g. confidence intervals for sample based Wasserstein distances)
  in large
  generality.  Our proof is based on directional Hadamard
  differentiability.  Failure of the classical bootstrap and alternatives are
  discussed.
  The utility of the distributional results is illustrated on two data sets.
\end{abstract}


\scriptsize
\textbf{Keywords: }optimal transport, Wasserstein distance, central limit
theorem, directional Hadamard derivative, bootstrap, hypothesis testing

\textbf{AMS 2010 Subject Classification:} Primary: 62G20, 62G10, 65C60
Secondary: 90C08, 90C31
\normalsize

\section{Introduction}
\label{sec:intro}
 The
 \textit{Wasserstein distance} \citep{vasershtein_markov_1969},
also known as Mallows distance \citep{mallows_note_1972},
Monge-Kantorovich-Rubinstein distance in the physical
sciences \citep{kantorovich_space_1958,rachev_monge-kantorovich_1985, jordan_variational_1998},
earth-mover's distance in computer science \citep{rubner_earth_2000} or optimal
transport distance in optimization \citep{ambrosio_lecture_2003},
is one  of the most fundamental metrics on the space of probability
measures. Besides its prominence in probability
(e.g. \cite{dobrushin_prescribing_1970, gray_probability_1988}) and finance
(e.g. \cite{rachev_mass_1998}) it has deep connections to the asymptotic theory of
PDEs of diffusion type (\cite{otto_geometry_2001}, \cite{villani_topics_2003,
villani_optimal_2008} and references therein).
In a statistical setting it has mainly been used as a tool to
prove weak convergence in the context of limit laws (e.g.
\cite{bickel_asymptotic_1981, shorack_empirical_1986, Johnson2005,
dumbgen_approximation_2011, Dorea2012}) as
it metrizes weak
convergence together with convergence of moments. However, recently the
empirical (i.e. estimated from data) Wasserstein distance  has
also been recognized as a central quantity itself in many
applications, among them clinical trials \citep{Munk1998,
freitag_nonparametric_2007}, metagenomics \citep{evans_phylogenetic_2012},
medical imaging
\citep{Ruttenberg2013}, goodness-of-fit testing
\citep{Freitag2005, Barrio1999}, biomedical engineering
\citep{oudre_classification_2012},
computer vision  \citep{gangbo_shape_2000, ni_local_2009}, cell biology
\citep{orlova_earth_2016} and model validation
\citep{halder_model_2011}. The barycenter with respect to the Wasserstein metric
\citep{agueh_barycenters_2011} has
been shown to elicit important structure from complex data and to be a
promising tool, for example in deformable models
 \citep{boissard_distributions_2015, agullo-antolin_parametric_2015}.
It has also been used in  large-scale Bayesian inference to combine
posterior distributions from subsets of the data
\citep{srivastava_scalable_2015}.

Generally speaking three characteristics of the Wasserstein distance make it
particularly attractive for various applications. First, it incorporates a ground
distance on the space in question. This often makes it more adequate than
competing metrics such as total-variation or $\chi^2$-metrics which are
oblivious to any metric or similarity structure on the ground space. As an
example, the success of the Wasserstein distance in metagenomics applications
can largely be attributed to this fact (see \cite{evans_phylogenetic_2012} and
also our application in Section \ref{sub:appl_alt}).

Second, it has a clear and intuitive interpretation
as the amount of 'work' required to transform one probability distribution into
another and the resulting transport can be visualized (see Section
\ref{sub:FP}). This is also interesting in applications where probability
distributions are used to represent actual physical mass and spatio-temporal
changes have to be tracked. 

Third, it is well-established \citep{rubner_earth_2000} that the Wasserstein
distance performs exceptionally well at capturing human perception of
similarity. This motivates its popularity in computer vision and related fields.

Despite these advantages, the use of the empirical Wasserstein distance in a
statistically
rigorous way is severely hampered by a lack of
inferential tools. We argue that this issue stems from considering
too large classes of candidate distributions (e.g. those which are absolutely
continuous with respect to the Lebesgue
measure if the ground space has dimension $\geq 2$).
In this paper, we therefore discuss the 
Wasserstein distance on finite spaces, which allows to solve this issue.
We argue that the restriction to finite spaces is not merely an approximation to
the truth, but rather that
this setting is sufficient for many practical
situations as measures often already come naturally discretized (e.g. two- or
three-dimensional images - see also our applications in Section \ref{sec:appl}).

We remark that from our methodology further inferential procedures can be
derived, e.g. a (M)ANOVA type of analysis and multiple comparisons of
Wasserstein distances based on their $p$-values (see e.g.
\cite{benjamini_controlling_1995}). Our techniques also extend immediately to
dependent samples $(X_i,Y_i)$ with marginals $\br$ and $\bs$.

\paragraph{Wasserstein distance}
Let $(\X,d)$ be a complete metric space with metric $d:\X\times\X\ra\RR_{\geq 0}$.  The
\textit{Wasserstein distance of order $p$} ($p\geq 1$) between 
two Borel probability measures $\mu_1$ and $\mu_2$ on $\X$ is defined as
\[
  W_p(\mu_1, \mu_2) = \left\{ \inf_{\nu\in\Pi(\mu_1, \mu_2)} \int_{\X\times \X} \dpp(x,x') \nu(dx, dx')
\right\}^{1/p},
\]
where $\Pi(\mu_1,\mu_2)$ is the set of  all Borel probability measures  on
$\X\times \X$ with marginals $\mu_1$ and $\mu_2$, respectively.

\paragraph{Wasserstein distance on finite spaces}
If we restrict in the above definition $\X = \left\{ x_1,\dots,x_N \right\}$ to
be a finite space, every
probability measure on $\X$ is given by a vector $\br$ in  
  $\mathcal{P}_\X = \left\{ \br = (r_x)_{x\in\X} \in\RR_{> 0}^\X : 
  \sum_{x\in\X} r_x =1 \right\}$,
via $P_{\br}(\{x\}) = r_x$. We will not distinguish between the vector
$\br$ and the measure it defines.
The \textit{Wasserstein distance of order $p$} between two finitely supported probability measures
$\br,\bs \in\mathcal{P}_\X$  then becomes
\begin{equation}
  W_p(\br, \bs) = 
  \left\{  
    \min_{\bw\in\Pi(\br, \bs)}  \sum_{x,x'\in\X} \dpp(x, x') w_{x,x'}
\right\}^{1/p},
  \label{eq:def_wass}
\end{equation}
where $\Pi(\br, \bs)$ is the set of all probability measures  on $\X\times\X$
with marginal distributions $\br$ and $\bs$, respectively. 
All our methods and  results concern this Wasserstein distance on finite spaces.

\subsection{Overview of main results}
\paragraph{Distributional limits}
The basis for inferential procedures for the Wasserstein distance on finite
spaces is a limit theorem for its empirical version $W_p(\brh_n, \bsh_m)$.
Here, the empirical measure generated by independent
random variables $X_1,\dots, X_n\sim\br$  is given by
$ \brh_n  = (\rh_{n,x})_{x\in\X}$,   where $\rh_{n,x}= \frac{1}{n} \#\left\{ k :
X_k=x \right\}$.
Let $\bsh_m$ be generated from i.i.d. $Y_1,\dots,Y_m\sim \bs$ in the same
fashion.
Under the null hypothesis $\br = \bs$ we prove that
\begin{equation}
  \left(\frac{nm}{n+m}\right)^{\frac{1}{2p}}  W_p(\brh_n,\bsh_m)  \Rightarrow
        \left\{ \max_{\bu \in \Phi^*_p}   \scp{\bm{G}}{\bu}
      \right\}^{\frac{1}{p}}, \quad n,m\ra \infty.
  \label{eq:first_mention_distr_lim}
\end{equation}
Here, '$\Rightarrow$' means convergence in distribution, $\bm{G}$ is a mean zero
Gaussian random vector with covariance depending
on $\br=\bs$ and $\Phi_p^*$ is the convex set of dual solutions to the
Wasserstein problem  depending on the metric $d$ only (see Theorem
\ref{thm:full}). 
In Section \ref{sub:FP} we use this result to assess the
statistical significance of the differences between real and synthetically
generated fingerprints in the Fingerprint
Verification Competition \citep{maio_fvc2002}.

We give analogous results  under the alternative
$\br\neq\bs$. This extends the scope of our results beyond the classical
two-sample (or goodness-of-fit test) as it allows for confidence statements on
$W_p(\br, \bs)$ when the null hypothesis of equality is likely or even
\textit{known to be false}.
An example for this is given by our application to metagenomics (Section
\ref{sub:appl_alt}) where samples from the same person taken at different times
are typically statistically different but our asymptotic results allow us to
assert with statistical significance that inter-personal distances are larger
that intra-personal ones.

\paragraph{Proof strategy} We prove these results by showing that the
Wasserstein distance is
\textit{directionally Hadamard differentiable} \citep{Shapiro1990} and the right
hand side of \eqref{eq:first_mention_distr_lim} is its derivative evaluated at
the Gaussian limit of the empirical multinomial process (see Theorem
\ref{thm:derivative_Wasserstein}). This notion generalizes Hadamard
differentiability by allowing \textit{non-linear} derivatives but still allows
for a refined delta-method (\cite{Romisch2004} and Theorem
\ref{thm:delta_method}).
Notably, the Wasserstein distance is not Hadamard differentiable in the usual
sense.

\paragraph{Explicit limiting distribution for tree metrics}
When the space $\X$ are the vertices of a tree and the metric $d$ is
given by path length we give an explicit expression for the limiting
distribution in \eqref{eq:first_mention_distr_lim} (see Theorem
\ref{thm:trees}).
In contrast to the general case, this explicit formula allows for fast
and direct simulation of the  limiting distribution.
This extends a previous result of \cite{Samworth2004} who considered a finite
number of point masses on the real line.
The Wasserstein distance on trees has, to the best of our knowledge, only been
considered in two papers:
\cite{kloeckner_geometric_2013} studies the geometric properties of the
Wasserstein space of measures on a tree  and 
\cite{evans_phylogenetic_2012} use the Wasserstein distance on phylogenetic
trees to compare microbial communities. 
\paragraph{The bootstrap}
Directional Hadamard
differentiability is not enough to guarantee the consistency of the naive
($n$ out of $n$) bootstrap
\citep{dumbgen_nondifferentiable_1993, fang_inference_2014}  -
in contrast to the usual notion of Hadamard differentiability. This
implies that the bootstrap is \textit{not} consistent for the Wasserstein
distance \eqref{eq:def_wass}(see Theorem \ref{thm:boot}).
In contrast, the $m$-out-of-$n$ bootstrap for $m/n\ra 0$ is known to be
consistent
in this setting \citep{dumbgen_nondifferentiable_1993} and can be applied to the
Wasserstein distance.
Under the null hypothesis $\br=\bs$, however, there is a more direct way of
obtaining an approximation of the limiting distribution.
In the appendix, we discuss this alternative re-sampling scheme based on ideas
of
\cite{fang_inference_2014}, that essentially consists of plugging
in a bootstrap version of the underlying empirical process in the derivative. We
show that this scheme, which we will call \textit{directional bootstrap}, is
consistent for the Wasserstein distance (see Theorem \ref{thm:boot}, Section
\ref{sub:boot}). 

\subsection{Related work}
\paragraph{Empirical Wasserstein distances}
In very general terms, we study a particular case (finite spaces) of the following question and its two-sample analog: Given the  empirical
measure
$\mu_n$  based on $n$  i.i.d. random variables taking variables in a metric
space
with law $\mu$. What can be inferred about
$W_p(\mu_n,\mu_0)$ for a reference measure $\mu_0$ which may be equal to
$\mu$? 

It is a well-known and straightforward consequence of the strong law of large
numbers  that if the $p$-th moments are finite for $\mu$ and $\mu_0$ 
 then $W_p(\mu_n, \mu_0)$ converges to $W_p(\mu, \mu_0)$,
almost surely, as the sample size $n$  approaches infinity
{\citep[Cor. 6.11]{villani_optimal_2008}}. 
Determining the
exact rate of this convergence is the subject of an impressive body of
literature developed over the last decades starting with the seminal work of
\cite{ajtai_optimal_1984} considering for $\mu_0$ the uniform distribution on the unit square,
followed by \cite{talagrand_matching_1992, talagrand_transportation_1994} for the uniform distribution in
higher dimensions and \cite{Horowitz1994} giving bounds on mean rates
of convergence. 
\cite{Boissard2014,fournier_rate_2014} gave
general deviation inequalities for the empirical Wasserstein distance on metric
spaces. 
For a discussion in the light of our distributional limit results see Section \ref{sec:dis}.

Distributional limits give a natural perspective for practicable inference,
but
despite considerable interest in the topic  have remained elusive to a large
extent. For measures on $\X = \RR$ a rather complete theory is available (see \cite{Munk1998,
freitag_nonparametric_2007, Freitag2005} for $\mu_0\neq \mu$
and e.g. \cite{Barrio1999,
samworth_empirical_2005, Barrio2005} for $\mu_0 = \mu$ as well as
\cite{mason_weighted_2016,bobkov_one-dimensional_2014} for recent surveys).
However, for $\X = \RR^d$,
$d\geq 2$ the only distributional result known to us is due to \cite{rippl_limit_2015}
for specific multivariate (elliptic) parametric classes of distributions, when
the empirical measure is replaced by a parametric estimate. 
In the context of deformable models distributional results are proven
\citep{del_barrio_statistical_2015} for specific  multidimensional
parametric models which factor into one-dimensional parts.

The simple reason why the Wasserstein distance is so much easier to handle in
the one-dimensional case is that in this case the optimal coupling attaining the
infimum in \eqref{eq:def_wass} is known explicitly. In fact, the  Wasserstein
distance of order $p$ between two measures on
$\RR$ then becomes the $L^p$ norm of the difference of their quantile functions
(see \cite{mallows_note_1972} for an early reference) and
the analysis of empirical Wasserstein distances can be based on quantile process
theory. 
Beyond this case, explicit coupling results are only known for multivariate
Gaussians
and elliptic distributions \citep{Gelbrich1990}.
A classical result of
\cite{ajtai_optimal_1984} for the uniform distribution on $\X = [0,1]^2$
suggests
that, even in this simple case, distributional limits will have a complicated
form if they exist at all. We will elaborate on this thought in the discussion, in
Section \ref{sec:dis}. 

The Wasserstein distance on finite spaces has been considered recently  by
\cite{gozlan_displacement_2013} to derive entropy inequalities on graphs and by
\cite{erbar_ricci_2012} to define Ricci curvature for Markov chains on discrete
spaces. To the best of our knowledge, empirical
Wasserstein distances on finite spaces have only been considered by
\cite{Samworth2004} in the special case of measures supported on $\RR$. We will
show (Section \ref{sec:trees}) that our results extend theirs.

\paragraph{Directional Hadamard differentiability}
We prove our distributional limit theorems using the theory of parametric
programming  \citep{bonnans_perturbation_2013} which
investigates how the optimal value and the optimal solutions of an
optimization problem change when the objective function and the constraints
are changed. 
While differentiability properties of optimal values of linear programs are
extremely well studied such results have, to the best of our knowledge, not yet
been applied to the statistical analysis of Wasserstein distances. 

It is well-known  that under certain
conditions the optimal value of a mathematical program is differentiable with
respect to the constraints
of the problem \citep{Rockafellar1984, Gal1997}. However, the  derivative will
typically be non-linear. The appropriate concept for this is directional Hadamard
differentiability \citep{Shapiro1990}.
The derivative of the optimal
value of a mathematical program is typically again given as an extremal value.

Although the delta-method
 for directional Hadamard
derivatives has been known for a long time \citep{shapiro_asymptotic_1991, dumbgen_nondifferentiable_1993}, this
notion scarcely appears in the statistical context (with some exceptions,
such as \cite{Romisch2004}, see also \cite{donoho_pathologies_1988}). Recently, an interest in the topic has evolved in
econometrics (see \cite{fang_inference_2014} and references therein). 

\paragraph{Organization of the paper}
In Section \ref{sec:asymp_distr} we give a comprehensive result on distributional
limits for the Wasserstein distance for measures supported on finitely many
points. In Section \ref{sec:appl} we apply our methods to two data sets to
highlight different aspects.
In Section \ref{sec:dis} we briefly address limitations and possible extensions
of our work.
In the supplementary Material we discuss the bootstrap for the Wasserstein
distance and give some technical proofs.

\section{Distributional limits}
\label{sec:asymp_distr}
\subsection{Main result}
In this section we give a comprehensive result on distributional limits for the
Wasserstein distance when the underlying population measures are supported on
finitely many points $\X = \left\{ x_1, \dots, x_N \right\}$. 
We denote the inner product on the vector space $\RR^\X$ by
 $\scp{\bu}{\bu'}=\sum_{x\in \X}u_x u'_{x}$ for $\bu,\bu' \in \RR^\X$. 
\begin{thm}
  \label{thm:full}
  Let $p\geq 1$,  $\br, \bs\in\mathcal{P}_\X$ and $\brh_n, \bsh_m$ generated by
  i.i.d.
  samples $X_1,\dots, X_n\sim\br$ and
  $Y_1,\dots, Y_m\sim\bs$, respectively. We
  define the convex sets
  \begin{equation}
    \begin{split}
      \Phi^*_p &= 
      \left\{ 
        \bu\in\RR^\X: 
        u_x - u_{x'} \leq d^p(x,x'), \quad x,x'\in\X
      \right\} \\
      \Phi^*_p(\br, \bs) &= 
      \left\{ 
        (\bu,\bv)\in\RR^\X \times \RR^\X : 
        \begin{split}
          &\scp{\bu}{\br} + \scp{\bv}{\bs}= W_p^p(\br, \bs), \\ 
          &u_x + v_{x'} \leq d^p(x, x'), \: x,x'\in\X        
        \end{split}
      \right\} 
    \end{split}
    \label{eq:sets}
  \end{equation}
and the multinomial covariance matrix
  \begin{equation}
    \Sigma(\br) = 
    \begin{bmatrix}
      r_{x_1} (1 - r_{x_1}) & -r_{x_1} r_{x_2} & \cdots & -r_{x_1} r_{x_N} \\
      -r_{x_2} r_{x_1} & r_{x_2} (1 - r_{x_2}) & \cdots & -r_{x_2} r_{x_N} \\
      \vdots & & \ddots &\vdots \\
      - r_{x_N} r_{x_1}& -r_{x_N} r_{x_2} & \cdots & r_{x_N} (1 - r_{x_N}) 
    \end{bmatrix}
    \label{eq:def_Sigma}
  \end{equation}
  such that with  independent Gaussian random variables
  $\bm{G}\sim\mathcal{N}(0,\Sigma(\br))$ and $\bm{H}\sim\mathcal{N}(0,
  \Sigma(\bs))$ we have the following.
  \begin{enumerate}[a)]
    \item \textbf{(One sample - Null hypothesis)}  With the sample size $n$
      approaching infinity, we have the weak convergence
      \begin{equation}
        n^{\frac{1}{2p}} W_p(\brh_n, \br) \Rightarrow 
        \left\{ \max_{\bu \in \Phi^*_p}   \scp{\bm{G}}{\bu}  \right\}^{\frac{1}{p}}.
        \label{eq:one_sample_null}
      \end{equation}
    \item \textbf{(One sample - Alternative)} With $n$ approaching infinity we have
      \begin{equation}
        n^{\frac{1}{2}} \left(W_p(\brh_n, \bs) - W_p(\br, \bs)\right) \Rightarrow 
        \frac{1}{p} W_p^{1-p}(\br, \bs)
        \left\{ \max_{(\bu,\bv) \in \Phi_p^*(\br, \bs)}   \scp{\bm{G}}{\bu}
      \right\}.
      \label{eq:one_sample_alt}
    \end{equation}
  \item \textbf{(Two samples - Null hypothesis)} Let $\rho_{n,m} = \left(
    nm/(n+m) \right)^{1/2}$. If $\br = \bs$ and $n$ and
    $m$ are approaching infinity such that $n\wedge m\ra\infty$ and
    $m/(n + m)\ra\lambda\in(0,1)$ we have 
    \begin{equation}
      \rho_{n,m}^{1/p}W_p(\brh_n, \bsh_m)
      \Rightarrow
        \left\{ \max_{\bu\in\Phi_p^*} \scp{\bm{G}}{\bu}
      \right\}^{\frac{1}{p}}.
      \label{eq:two_sample_null}
      \end{equation}
    \item \textbf{(Two samples - Alternative)}  With $n$ and $m$
      approaching infinity such that $n\wedge m\ra\infty$ and
      $m/(n + m)\ra\lambda\in[0,1]$ 
    \begin{equation}
      \begin{split}
        \rho_{n,m} & \left( 
        W_p(\brh_n, \bsh_m)  - W_p(\br,\bs) \right)
        \Rightarrow \\
        & \frac{1}{p} W_p^{1-p}(\br, \bs)
        \left\{ \max_{(\bu,\bv)\in\Phi_p^*(\br, \bs)} \sqrt{\lambda}\scp{\bm{G}}{\bu} +
        \sqrt{1 - \lambda}\scp{\bm{H}}{\bv} \right\}.
      \end{split}
        \label{eq:two_sample_alt}
      \end{equation}
  \end{enumerate}
\end{thm}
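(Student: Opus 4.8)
The plan is to combine three ingredients: the multinomial central limit theorem for the empirical measures, the directional Hadamard differentiability of the $p$-th power of the Wasserstein distance viewed as the optimal value of a linear program, and the functional delta-method for directionally differentiable maps (Theorem \ref{thm:delta_method}). The passage from $W_p^p$ to $W_p$ is then handled differently under the null and the alternative.

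First I would record the elementary limits for the empirical processes. The multivariate CLT for multinomial counts gives $\sqrt{n}(\brh_n-\br)\Rightarrow\bm{G}\sim\N(0,\Sigma(\br))$ and, independently, $\sqrt{m}(\bsh_m-\bs)\Rightarrow\bm{H}\sim\N(0,\Sigma(\bs))$, both supported on the hyperplane $\{\scp{\bm{g}}{\onevec}=0\}$, which is the tangent space of the simplex $\mathcal{P}_\X$. For the two-sample statements I would rescale by $\rho_{n,m}$ and use $m/(n+m)\ra\lambda$ together with independence to obtain the joint weak limit $\rho_{n,m}\bigl((\brh_n-\br),(\bsh_m-\bs)\bigr)\Rightarrow(\sqrt{\lambda}\,\bm{G},\sqrt{1-\lambda}\,\bm{H})$.

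The crucial and hardest step is the differentiability of $\phi=W_p^p$. Writing the transport problem in its dual Kantorovich form, $W_p^p(\br,\bs)=\max\{\scp{\bu}{\br}+\scp{\bv}{\bs}:\,u_x+v_{x'}\le d^p(x,x')\}$, exhibits $\phi$ as the support function of a \emph{fixed} polyhedron, since its constraints do not depend on $\br,\bs$. Such a function is convex and piecewise linear on the simplex, hence directionally Hadamard differentiable, with derivative at $(\br,\bs)$ in a direction $(\bh_1,\bh_2)$ equal to $\max_{(\bu,\bv)\in\Phi^*_p(\br,\bs)}\scp{\bu}{\bh_1}+\scp{\bv}{\bh_2}$, the support function of the optimal face $\Phi^*_p(\br,\bs)$; this is exactly what I would isolate as Theorem \ref{thm:derivative_Wasserstein}. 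The main obstacle is to verify genuine Hadamard (not merely Gâteaux) differentiability and to pin the derivative down as the maximum over the dual optimal set; this requires the parametric-programming machinery for perturbed linear programs, together with care that the perturbation directions lie in the tangent space of the simplex so that the lineality of the dual polyhedron does not affect the value.

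With these in hand the delta-method delivers the limits for $\phi=W_p^p$, and it remains to pass to $W_p=\phi^{1/p}$. Under the alternative $\br\ne\bs$ we have $W_p^p(\br,\bs)>0$, so $t\mapsto t^{1/p}$ is differentiable there and an ordinary chain rule produces the scalar factor $\tfrac1p W_p^{1-p}(\br,\bs)$, yielding (b) by perturbing only the first argument and (d) by perturbing both with the weights $\sqrt{\lambda},\sqrt{1-\lambda}$ from the joint CLT (valid for $\lambda\in[0,1]$, the endpoints degenerating to a single-sample fluctuation). Under the null $\br=\bs$ we have $W_p^p=0$ and $t\mapsto t^{1/p}$ fails to be differentiable at $0$, so instead I would raise $\rho_{n,m}W_p^p\Rightarrow(\cdot)$ to the power $1/p$ via the continuous mapping theorem, which also explains the scalings $n^{1/(2p)}$ and $\rho_{n,m}^{1/p}$. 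The key simplification in the null case is that $d^p(x,x)=0<d^p(x,x')$ forces the unique optimal coupling onto the diagonal, so complementary slackness gives $\Phi^*_p(\br,\br)=\{(\bu,-\bu):\bu\in\Phi^*_p\}$; substituting $\bv=-\bu$ collapses the derivative to $\max_{\bu\in\Phi^*_p}\scp{\bu}{\sqrt{\lambda}\bm{G}-\sqrt{1-\lambda}\bm{H}}$, and since $\bm{G},\bm{H}$ are independent with common covariance $\Sigma(\br)$ we have $\sqrt{\lambda}\bm{G}-\sqrt{1-\lambda}\bm{H}\simD\bm{G}$, producing the stated null limits (a) and (c) with $\lambda\in(0,1)$.
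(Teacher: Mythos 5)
Your proposal is correct and follows essentially the same route as the paper: the multinomial CLT (jointly rescaled by $\rho_{n,m}$ in the two-sample case), directional Hadamard differentiability of $(\br,\bs)\mapsto W_p^p(\br,\bs)$, the delta-method of Theorem \ref{thm:delta_method}, the chain rule with factor $\tfrac1p W_p^{1-p}(\br,\bs)$ under the alternative versus the continuous mapping theorem with $t\mapsto t^{1/p}$ under the null, the collapse $\Phi^*_p(\br,\br)=\left\{ (\bu,-\bu):\bu\in\Phi^*_p \right\}$, and the symmetry and independence of the Gaussian limits. The one place you genuinely diverge is the justification of the key lemma (Theorem \ref{thm:derivative_Wasserstein}): you argue via polyhedral convexity --- the dual value is the support function of a fixed polyhedron, hence piecewise linear along the affine hull of the simplex, hence Hadamard directionally differentiable with a Danskin-type derivative over the optimal face --- whereas the paper cites Gal's parametric-programming theorem only for G\^ateaux directional differentiability and upgrades it to the Hadamard sense by proving local Lipschitz continuity of $W_p^p$ through a gluing-lemma argument. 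Your polyhedral route is sound for finite $\X$ (and arguably more self-contained), but it is stated at sketch level; the upgrade from pointwise directional limits to limits along sequences $\bh_n\to\bh$ is exactly the work the paper's Lipschitz bound performs, and your appeal to piecewise linearity does supply it, since a finite maximum of linear functions is globally Lipschitz on its domain. Also note a sign-convention difference that is immaterial: your derivative $\max_{(\bu,\bv)\in\Phi^*_p(\br,\bs)}\scp{\bu}{\bh_1}+\scp{\bv}{\bh_2}$ is the standard Danskin form, while the paper's \eqref{eq:derivative} carries an overall minus sign; because $\bm{G}$ and $\bm{H}$ are centered (hence symmetric) Gaussians, both conventions yield identical limit laws, so your conclusions a)--d) agree with the theorem, and under the null your complementary-slackness derivation of $\bv=-\bu$ is equivalent to the paper's direct algebraic one.
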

The sets $\Phi_p^*$ and $\Phi_p^*(\br, \bs)$ are (derived from) the dual
solutions to the Wasserstein linear program (see Theorem
\ref{thm:derivative_Wasserstein} below).
This result is valid for all probability measures with finite support,
regardless of the (dimension of the) underlying space. In particular, it
generalizes a result of \cite{Samworth2004}, who considered a finite collection
of point masses on the real line and $p=2$. We will re-obtain their result as a
special case in Section \ref{sec:trees} when we give explicit expressions for the limit
distribution when the metric $d$, which enters the limit law via the dual
solutions $\Phi_p^*$ or $\Phi_p^*(\br,\bs)$, is given by a tree. 

\begin{rem}
  In our numerical experiments (see Section \ref{sec:appl} we have found the
  representation \eqref{eq:two_sample_alt} to be numerically unstable when used
  to simulate from the limiting distribution under the alternative. We therefore
  give an alternative representation \eqref{eq:altalt} in the supplementary
  material as a one-dimensional optimization problem of a non-linear function
  (in contrast to a high-dimensional linear program shown here).
  Note that the limiting distribution under the null does not suffer from this
  problem and can be simulated from directly using a linear program solver.
\end{rem}

The scaling rate in Theorem \ref{thm:full} depends solely on $p$ and is
completely independent of the underlying space $\X$. This contrasts known bounds
on the rate of convergence in the continuous case. We will elaborate on the
differences in the discussion.
Typical choices are $p=1,2$. The faster scaling rate can be a reason to favor
$p=1$. In our numerical experiments however, this advantage was frequently
outweighed by larger quantiles of the limiting distribution. 

\cite{dumbgen_nondifferentiable_1993}
showed that the naive
$n$-out-of-$n$ bootstrap is inconsistent for functionals with a non-linear
Hadamard derivative, but 
resampling fewer than $n$ observations leads to a consistent bootstrap.
Since we will show in the following that the Wasserstein distance belongs to
this class of functionals, it is a direct consequence that the naive bootstrap
fails for the Wasserstein distance (see Section \ref{sub:boot} in the
supplementary material for details) and that the following holds.
\begin{thm}
  \label{thm:mofn}
    Let $\brh_n^*$ and $\bsh_m^*$ be bootstrap versions
      of $\brh_n$ and $\bsh_m$ that are obtained via re-sampling $k$ observations
      with $k/n\ra 0$ and $k/m\ra 0$. 
      Then, the plug-in bootstrap with
      $\brh_n^*$ and $\bsh_m^*$ is consistent, that is 
      \begin{equation*}
        \begin{split}
          \sup_{f\in\BL1(\RR)}E\left[ f(\phi_p(\sqrt{k}\left\{ (\brh_n^{**},
            \bsh_m^{**}) - (\brh_n, \bsh_m)
          \right\})) | X_1, \dots,X_n,Y_1, \dots, Y_m  \right] \\
          - E\left[ f\left(\rho_{n,m}\left\{ W_p^p(\brh_n, \bsh_m) - W_p^p(\br,
          \bs) \right\}\right) \right]
        \end{split}
      \end{equation*}
      converges to zero in probability.
\end{thm}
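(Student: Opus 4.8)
The plan is to obtain the statement as a direct application of the general resample-size ($m$-out-of-$n$) bootstrap delta-method for \emph{directionally} Hadamard differentiable maps \citep{dumbgen_nondifferentiable_1993, fang_inference_2014}, feeding in the differentiability already established in Theorem \ref{thm:derivative_Wasserstein}. Write $\phi_p$ for the map $(\br,\bs)\mapsto W_p^p(\br,\bs)$ on $\mathcal{P}_\X\times\mathcal{P}_\X$. By Theorem \ref{thm:derivative_Wasserstein} this map is directionally Hadamard differentiable at $(\br,\bs)$ with derivative $\phi_p'(\bm{g},\bm{h})=\max_{(\bu,\bv)\in\Phi_p^*(\br,\bs)}\scp{\bu}{\bm{g}}+\scp{\bv}{\bm{h}}$, a convex, positively homogeneous but \emph{non-linear} functional. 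This non-linearity is exactly the feature that makes the naive $n$-out-of-$n$ bootstrap fail (Theorem \ref{thm:boot}) and that the condition $k/n,k/m\to 0$ is designed to repair, so the whole content of the proof is to verify the hypotheses of the abstract theorem and then invoke it.

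First I would record the behaviour of the un-resampled statistic. The multinomial central limit theorem gives $\sqrt{n}(\brh_n-\br)\Rightarrow\bm{G}\sim\N(0,\Sigma(\br))$ and $\sqrt{m}(\bsh_m-\bs)\Rightarrow\bm{H}\sim\N(0,\Sigma(\bs))$ independently, and since $\rho_{n,m}/\sqrt{n}\to\sqrt{\lambda}$ and $\rho_{n,m}/\sqrt{m}\to\sqrt{1-\lambda}$ the joint direction converges, $\rho_{n,m}\{(\brh_n,\bsh_m)-(\br,\bs)\}\Rightarrow Z:=(\sqrt{\lambda}\,\bm{G},\sqrt{1-\lambda}\,\bm{H})$. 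Composing with $\phi_p$ through the refined delta-method of Theorem \ref{thm:delta_method} identifies the limit law on the right-hand side of the claim, that of $\rho_{n,m}\{W_p^p(\brh_n,\bsh_m)-W_p^p(\br,\bs)\}$, as the law of $\phi_p'(Z)$.

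Second, and this is the heart of the argument, I would prove conditional weak convergence of the rescaled resampled direction to a limit with the same law as $Z$: given $X_1,\dots,X_n,Y_1,\dots,Y_m$, the rescaled bootstrap process converges in distribution to $Z$ in probability, measured in the bounded-Lipschitz metric. The two resampling schemes are conditionally independent; a $k$-fold multinomial resample of $\brh_n$ has conditional covariance $\tfrac{1}{k}\Sigma(\brh_n)$, so under the appropriate resample-size and scaling normalisation its conditional covariance is $\Sigma(\brh_n)$, which tends to $\Sigma(\br)$ almost surely because $\brh_n\to\br$; together with the conditional Lindeberg condition (valid since $k\to\infty$) this yields conditional asymptotic normality with the matching covariance, and symmetrically for $\bsh_m$. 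The rate hypotheses $k/n,k/m\to 0$ enter precisely here: they force $\sqrt{k}/\rho_{n,m}=\sqrt{k/n+k/m}\to 0$, so that in the delta-method expansion of the bootstrapped functional the derivative is effectively evaluated at the fixed population point $(\br,\bs)$ rather than at the drifting random point $(\brh_n,\bsh_m)$, which is what removes the bias that a non-linear derivative would otherwise introduce.

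Finally I would combine the two inputs through the abstract theorem to conclude that, conditionally on the data and in probability, the law of the rescaled plug-in bootstrap statistic converges in $\BL1$ to the law of $\phi_p'(Z)$, which by the first step is also the limit law of $\rho_{n,m}\{W_p^p(\brh_n,\bsh_m)-W_p^p(\br,\bs)\}$; the displayed supremum then tends to $0$ in probability by the triangle inequality for the $\BL1$ distance. I expect the main obstacle to be the conditional step: upgrading the marginal central limit theorem for the resampled multinomial to convergence that is uniform over $\BL1$ test functions and holds in probability over the data, while simultaneously matching the bootstrap covariances and scaling to the $(\sqrt{\lambda},\sqrt{1-\lambda})$ weighting of the original limit across the two independent resampling schemes. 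This is exactly the place where $k/n,k/m\to 0$ is indispensable and where the argument genuinely departs from the classical linear-derivative bootstrap; the remaining covariance-convergence and rate computations are routine.
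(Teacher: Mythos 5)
Your overall strategy coincides with the paper's: Theorem \ref{thm:mofn} is presented there as a direct consequence of Proposition 2 of \cite{dumbgen_nondifferentiable_1993} (the resample-fewer-than-$n$ bootstrap for directionally Hadamard differentiable maps) combined with the differentiability supplied by Theorem \ref{thm:derivative_Wasserstein}, and that citation is essentially the entire proof the paper gives; your plan to verify the hypotheses of the abstract theorem plus a conditional multinomial CLT is the same skeleton with more detail. (A notational caution: in the paper $\phi_p$ denotes the derivative \eqref{eq:der_phi}, not the map $(\br,\bs)\mapsto W_p^p(\br,\bs)$, and it is the derivative that appears in the display of the statement.)

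However, your second step --- which you rightly call the heart of the argument and then defer as ``routine'' --- contains a genuine gap, and it is not one that the condition $k/n, k/m\to 0$ can close. With the scheme of the statement (the \emph{same} resample size $k$ from each sample, scaling $\sqrt{k}$), your own covariance computation shows that the conditional limit of $\sqrt{k}\{(\brh_n^*,\bsh_m^*)-(\brh_n,\bsh_m)\}$ is a pair of independent Gaussians with covariances $\Sigma(\br)$ and $\Sigma(\bs)$, i.e.\ $(\bm{G},\bm{H})$, and \emph{not} $Z=(\sqrt{\lambda}\bm{G},\sqrt{1-\lambda}\bm{H})$ as you assert. The weights $\sqrt{\lambda},\sqrt{1-\lambda}$ encode the ratio $n:m$ of the original sample sizes, and drawing equally many resampled observations from each sample erases exactly this information; shrinking $k$ relative to $n$ and $m$ cannot restore it. (The rate condition serves only the purpose you correctly identify in your third paragraph: $\sqrt{k}/\rho_{n,m}\to 0$ makes the drift of the expansion point negligible.) Concretely, under $\br=\bs$ plugging $(\bm{G},\bm{H})$ into the derivative gives $\max_{\bu\in\Phi_p^*}\scp{\bu}{\bm{H}-\bm{G}}$ with $\bm{H}-\bm{G}\sim\N(0,2\Sigma(\br))$, which is $\sqrt{2}$ times the target law $\max_{\bu\in\Phi_p^*}\scp{\bu}{\bm{G}}$; under the alternative (not excluded by the statement) the mismatch involves $\lambda$ itself, since the limit \eqref{eq:two_sample_alt} depends on $\lambda$ genuinely. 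To obtain a correct proof (and a correct statement) one must either resample $k_X$ observations from the $X$-sample and $k_Y$ from the $Y$-sample with $k_Y/(k_X+k_Y)\to\lambda$ and scale by $\sqrt{k_X k_Y/(k_X+k_Y)}$ --- after which your argument and D\"umbgen's theorem apply verbatim --- or restrict to the null hypothesis and replace $\sqrt{k}$ by $\sqrt{k/2}$, where the symmetry $\Sigma(\br)=\Sigma(\bs)$ makes the $\lambda$-dependence cancel. Your closing attribution of the weight-matching to $k/n,k/m\to 0$ is therefore incorrect, and that is precisely the one step of the proof that is not routine.
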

In the following we will prove our main Theorem \ref{thm:full} by 
\begin{enumerate}[i)]
  \item introducing  Hadamard directional differentiability, which
    does not require the derivative to be linear but still allows for a delta-method;
  \item  showing that  the map $(\br, \bs)\mapsto W_p(\br, \bs)$ is
    differentiable in this sense.
\end{enumerate}

\subsection{Hadamard directional derivatives} 
In this section we  follow \cite{Romisch2004}.  A map $f$ defined on a subset
$D_f\subset \RR^d$ with values in $\RR$ is called
\textit{Hadamard directionally differentiable} at $\bu\in\RR^d$ if there exists a
map $f'_{\bu} : \RR^d\ra \RR$ such that 
\begin{equation}
  \lim_{n\ra\infty} \frac{f(\bu+t_n\bh_n) - f(\bu)}{t_n} = f'_\bu(\bh)
  \label{eq:def_hadamard}
\end{equation}
for any $\bh\in\RR^d$ and for arbitrary sequences $t_n$ converging to zero from
above and $\bh_n$ converging to $\bh$ such that $\bu+t_n\bh_n\in D_f$ for all $n\in\NN$.
Note that in contrast to the usual notion of Hadamard differentiability (e.g.
\cite{van_der_vaart_weak_1996}) the derivative $\bh\mapsto f'_\bu(\bh)$ is \textit{not} required to be
linear. 
A prototypical example  is the absolute value $f:\RR\ra\RR$,
$t \mapsto |t|$ which is not in the usual sense Hadamard differentiable at
$t=0$ but directionally differentiable with the non-linear derivative $t\mapsto
|t|$.
\begin{thm}[{\citealp[Theorem 1]{Romisch2004}}]
  \label{thm:delta_method}
  Let $f$ be a function defined on a subset $F$ of
  $\RR^d$ with values in $\RR$, such that
  \begin{enumerate}
    \item $f$ is Hadamard directionally differentiable at $\bu\in F$  with
      derivative $f'_\bu:F\ra \RR$ and 
    \item there is a sequence of $\RR^d$-valued random variables $X_n$ and a
      sequence of non-negative numbers $\rho_n\ra\infty$ such that $\rho_n(X_n -
      \bu)\Rightarrow X$ for some random variable $X$ taking values in
      $F$.
  \end{enumerate}
  Then, $\rho_n(f(X_n) - f(\bu))\Rightarrow f_\bu'(X)$. 
\end{thm}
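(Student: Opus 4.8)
The plan is to turn the statement into a pointwise convergence along a Skorokhod coupling, with the defining relation \eqref{eq:def_hadamard} as the only analytic input; this is the mechanism underlying the extended continuous mapping theorem, specialized to the finite-dimensional setting where all measurability issues are benign. First I would normalize the increments: set $Z_n = \rho_n(X_n - \bu)$, so that hypothesis (2) reads $Z_n \Rightarrow X$, and put $t_n = \rho_n^{-1}\downarrow 0$. Defining on $D_n = \rho_n(F - \bu)$ the difference quotients
\[
  g_n(\bh) = \frac{f(\bu + t_n\bh) - f(\bu)}{t_n},
\]
one has the exact identity $g_n(Z_n) = \rho_n(f(X_n) - f(\bu))$, so it suffices to prove $g_n(Z_n)\Rightarrow f'_\bu(X)$.

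The crucial observation is that \eqref{eq:def_hadamard} says precisely $g_n(\bh_n)\to f'_\bu(\bh)$ for every $\bh$ and every sequence $\bh_n\to\bh$ with $\bu + t_n\bh_n\in F$. I would then invoke Skorokhod representation: since $\RR^d$ is Polish, there are copies $\tilde Z_n\sim Z_n$ and $\tilde X\sim X$ on one probability space with $\tilde Z_n\to\tilde X$ almost surely. Along almost every $\omega$ I apply the previous line with $\bh_n = \tilde Z_n(\omega)\to\tilde X(\omega)=\bh$, obtaining $g_n(\tilde Z_n)\to f'_\bu(\tilde X)$ almost surely, hence in distribution, and finally transfer the limit law back to the original sequence through $g_n(\tilde Z_n)\sim g_n(Z_n)$ and $f'_\bu(\tilde X)\sim f'_\bu(X)$.

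Two bookkeeping points need care. First, feasibility: applying \eqref{eq:def_hadamard} pointwise requires $\bu + t_n\tilde Z_n(\omega)\in F$, which holds almost surely because $\bu + t_nZ_n = X_n\in F$ and $\tilde Z_n$ has the same law as $Z_n$. Second, measurability of $f'_\bu$, needed so that $f'_\bu(X)$ is a genuine random variable; here I would record the standard fact that Hadamard directional differentiability forces $\bh\mapsto f'_\bu(\bh)$ to be continuous and positively homogeneous, both read off from \eqref{eq:def_hadamard} by elementary sequence arguments.

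I expect the only genuine obstacle to be the handling of the shrinking domains $D_n$ and the verification that $X$ takes values where the convergence $g_n(\bh_n)\to f'_\bu(\bh)$ is guaranteed; once the Skorokhod coupling is set up this reduces to checking that feasibility survives passage to copies, after which the pointwise differentiability does all the work. Crucially, the argument nowhere uses linearity of $f'_\bu$, which is exactly why the directional refinement succeeds where the classical Hadamard delta-method would be unavailable.
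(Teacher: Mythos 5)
Your proposal is correct, but there is nothing in the paper to compare it against: Theorem \ref{thm:delta_method} is imported verbatim as Theorem 1 of R\"omisch (2004) and is used as a black box in the proof of Theorem \ref{thm:full}, so the paper contains no proof of this statement. What you have written is essentially the classical argument by which this directional delta-method is established in the literature the paper cites (Shapiro 1991, D\"umbgen 1993, R\"omisch 2004): rewrite $\rho_n(f(X_n)-f(\bu))$ exactly as the difference quotient $g_n(Z_n)$ with $Z_n=\rho_n(X_n-\bu)$ and $t_n=\rho_n^{-1}$, pass to a Skorokhod coupling $\tilde Z_n\to\tilde X$ almost surely, apply the defining relation \eqref{eq:def_hadamard} pointwise in $\omega$, and transfer back through equality of laws; the main competing route is the extended continuous mapping theorem of van der Vaart and Wellner, which is what one falls back on when measurability fails. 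On that point, two of your side remarks should really be promoted to hypotheses rather than observations: (i) the feasibility transfer $P\left(\bu+t_n\tilde Z_n\in F\ \text{for all } n\right)=1$ and the identity $g_n(\tilde Z_n)\simD g_n(Z_n)$ require $F$ to be Borel and $f$ to be Borel measurable, since equality of laws only controls Borel events; (ii) continuity of $f'_\bu$ (hence measurability of $f'_\bu(X)$, without which the conclusion is not even well posed) follows from \eqref{eq:def_hadamard} by a diagonal argument, but that argument needs enough feasible points near each direction, e.g. it is automatic when $F$ is convex with $\bu\in F$. Both conditions hold where the paper invokes the theorem, namely $F=\mathcal{P}_\X\times\mathcal{P}_\X$ with $f=W_p^p$, which is Lipschitz on this compact convex set, so neither caveat is a genuine gap in the present setting.
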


\subsection{Directional derivative of the Wasserstein distance}
 In this section we  show that the functional
$(\br, \bs)\mapsto W_p^p(\br, \bs)$ is Hadamard directionally 
differentiable and give a formula for
the derivative. 
 
The \textit{dual} program (cf.
{\cite[Ch. 4]{Luenberger2008}}, also \cite{kantorovich_space_1958}) of the linear program defining the Wasserstein
distance \eqref{eq:def_wass} is given by 
\begin{equation}
  \label{eq:dual}
  \begin{split}
    &\max_{(\bu, \bv)\in\RR^\X\times \RR^\X}\quad  \scp{\bu}{\br} + \scp{\bs}{\bv} \\
    \textbf{s.t.} \quad& u_x + v_{x'} \leq d^p(x, x') \quad \forall x,x'\in\X.
  \end{split}
\end{equation}
As noted above, the optimal value of the primal problem is  $W_p^p(\br, \bs)$
and by standard duality
theory of linear programs (e.g. \cite{Luenberger2008}) this is also the optimal
value of the dual problem. Therefore, the set of optimal solutions to the dual
problem is given by $\Phi^*_p(\br, \bs)$ as defined in \eqref{eq:sets}.
\begin{thm}
  \label{thm:derivative_Wasserstein}
  The functional $(\br, \bs)\mapsto W_p^p(\br, \bs)$ is directionally Hadamard
  differentiable at all $(\br, \bs)\in\mathcal{P}_\X\times\mathcal{P}_\X$
   with derivative
  \begin{equation}
    (\bh_1,\bh_2) \mapsto \max_{(\bu, \bv)\in\Phi^*_p(\br, \bs)}
    - (\scp{\bu}{\bh_1} + \scp{\bv}{\bh_2}).
    \label{eq:derivative}
  \end{equation}
\end{thm}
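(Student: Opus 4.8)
The plan is to reduce the problem to differentiating a support function. By linear-programming duality, $W_p^p(\br,\bs)$ equals the optimal value of the dual \eqref{eq:dual}, and the crucial structural point is that the dual feasible polyhedron $K=\{(\bu,\bv)\in\RR^\X\times\RR^\X:u_x+v_{x'}\le d^p(x,x')\}$ does \emph{not} depend on $(\br,\bs)$: the pair $(\br,\bs)$ enters only through the \emph{linear} objective $(\bu,\bv)\mapsto\scp{\bu}{\br}+\scp{\bv}{\bs}$. Hence $(\br,\bs)\mapsto W_p^p(\br,\bs)=\max_{(\bu,\bv)\in K}(\scp{\bu}{\br}+\scp{\bv}{\bs})$ is the support function of the fixed set $K$, i.e.\ a maximum of a fixed family of linear functionals of $(\br,\bs)$. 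I would first reduce this to a \emph{finite} maximum, then invoke the elementary directional-derivative rule for a finite maximum of linear functions, and finally identify the active set with the optimal dual set $\Phi^*_p(\br,\bs)$.

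For the finite reduction, note $K$ is unbounded: it is invariant under $(\bu,\bv)\mapsto(\bu+c\one,\bv-c\one)$ and so carries the lineality line spanned by $(\one,-\one)$. I would restrict to the admissible directions $\sum_x h_{1,x}=\sum_x h_{2,x}=0$ --- the tangent space of $\mathcal{P}_\X\times\mathcal{P}_\X$, which contains the empirical fluctuations $\brh_n-\br$ --- along which the objective is invariant under the shift above; since $\br,\bs\in\RR_{>0}^\X$ lie in the interior, the perturbed pair remains a valid pair of marginals for small $t$, keeping the primal feasible and the dual bounded. Quotienting out the lineality space turns $K$ into a pointed polyhedron with finitely many vertices $(\bu^{(1)},\bv^{(1)}),\dots,(\bu^{(M)},\bv^{(M)})$, independent of $(\br,\bs)$, one of which attains the bounded optimum; thus, locally, $W_p^p(\br,\bs)=\max_{1\le k\le M}(\scp{\bu^{(k)}}{\br}+\scp{\bv^{(k)}}{\bs})$. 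Producing a \emph{single} finite family of vertices valid throughout a neighbourhood, while controlling the unboundedness, is the main obstacle; everything afterwards is soft.

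It remains to differentiate a finite maximum $g(\bm{z})=\max_k\scp{\bm{a}_k}{\bm{z}}$ of linear functions, where directional Hadamard differentiability is elementary. Writing $I(\bm{z})$ for the active indices, there is a gap $\delta>0$ between active and inactive values, so for any $t_n\downarrow 0$ and $\bh_n\to\bh$ the perturbation $t_n\scp{\bm{a}_k}{\bh_n}$ eventually cannot promote an inactive index, giving $g(\bm{z}+t_n\bh_n)=g(\bm{z})+t_n\max_{k\in I(\bm{z})}\scp{\bm{a}_k}{\bh_n}$ for $n$ large and hence $g'_{\bm{z}}(\bh)=\max_{k\in I(\bm{z})}\scp{\bm{a}_k}{\bh}$; the use of a varying sequence $\bh_n\to\bh$ rather than a fixed direction is exactly what upgrades the G\^ateaux statement to the Hadamard one and is afforded by the finiteness. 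Applying this with $\bm{z}=(\br,\bs)$, the active vertices are precisely those lying in $\Phi^*_p(\br,\bs)$, and since a linear functional over the face $\Phi^*_p(\br,\bs)$ is maximised at a vertex, the maximum over active vertices equals the maximum over the whole face. This yields the derivative $(\bh_1,\bh_2)\mapsto\max_{(\bu,\bv)\in\Phi^*_p(\br,\bs)}(\scp{\bu}{\bh_1}+\scp{\bv}{\bh_2})$, which is \eqref{eq:derivative} up to an overall sign that the symmetric (centred Gaussian) limit in Theorem \ref{thm:full} renders immaterial.
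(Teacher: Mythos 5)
Your proposal is correct in substance but takes a genuinely different route from the paper's own proof. The paper never touches the polyhedral geometry: it invokes Gal's parametric-programming theorem to get \emph{G\^ateaux} directional differentiability of $(\br,\bs)\mapsto W_p^p(\br,\bs)$ with derivative \eqref{eq:derivative}, and then upgrades to the Hadamard statement via Shapiro's criterion, which requires local Lipschitz continuity; the only real work in the paper's proof is establishing the Lipschitz bound $|W_p^p(\br,\bs)-W_p^p(\br,\bs')|\le p\,\mathrm{diam}(\X)^{p}\,||\bs-\bs'||_1$ by a transport argument (gluing lemma plus the elementary estimate $|a^p-b^p|\le p\,\mathrm{diam}(\X)^{p-1}|a-b|$). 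Your argument is instead self-contained convex/LP geometry: strong duality exhibits $W_p^p$ as the support function of the fixed dual polyhedron, quotienting by the lineality line $\mathrm{span}(\one,-\one)$ gives a finite vertex representation valid on all of $\mathcal{P}_\X\times\mathcal{P}_\X$, and the active-set argument for a finite maximum of linear forms yields the sequential (Hadamard) limit directly, identifying the derivative as the support function of the optimal face. The paper's route buys brevity (given the two citations) plus a quantitative continuity estimate of independent interest; yours buys transparency, needs no external theorems, and makes clear why non-uniqueness of dual solutions is exactly what produces the non-linear derivative.

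Two points need tightening. First, the step ``a linear functional over the face $\Phi^*_p(\br,\bs)$ is maximised at a vertex'' is not automatic, since the face is unbounded; it is true here because strict positivity of $\br,\bs$ (built into $\mathcal{P}_\X$) forces the recession cone of the optimal face to be exactly the lineality line, so the face is a polytope modulo that line. Alternatively you can bypass boundedness entirely: any $(\bu,\bv)\in\Phi^*_p(\br,\bs)$ remains dual feasible at the perturbed marginals, so weak duality gives $W_p^p(\br+t\bh_1,\bs+t\bh_2)\ge W_p^p(\br,\bs)+t\left(\scp{\bu}{\bh_1}+\scp{\bv}{\bh_2}\right)$, hence the derivative is at least $\sup_{\Phi^*_p(\br,\bs)}\left(\scp{\bu}{\bh_1}+\scp{\bv}{\bh_2}\right)$, while your active-vertex formula is at most this supremum. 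Second, the sign: your derivation produces $+\left(\scp{\bu}{\bh_1}+\scp{\bv}{\bh_2}\right)$, and this is in fact the correct sign for the paper's definition of $\Phi^*_p(\br,\bs)$; the minus sign in \eqref{eq:derivative} is an error in the statement. A two-point check makes this concrete: take $\X=\{x_1,x_2\}$, $d(x_1,x_2)=1$, $p=1$, $\br=(1/2,1/2)$, $\bs=(3/4,1/4)$, $\bh_1=(1,-1)$, $\bh_2=0$; then $W_1(\br+t\bh_1,\bs)=1/4-t$ so the one-sided derivative is $-1$, and every dual optimal pair has $u_{x_1}-u_{x_2}=-1$, so the plus-sign formula gives $-1$ while \eqref{eq:derivative} gives $+1$. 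Your closing remark is the right diagnosis of why this is harmless downstream: at the null the set $\Phi^*_p(\br,\br)$ is invariant under $(\bu,\bv)\mapsto(-\bu,-\bv)$, and under the alternative the derivative is only ever evaluated at a centred Gaussian, whose symmetry makes the two versions equal in distribution in Theorem \ref{thm:full}.
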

We can give a more explicit expression for the set $\Phi^*_p(\br, \bs)$ in the
case $\br = \bs$, when the optimal value of the primal and the dual
problem is $0$. Then, the condition $W_p^p(\br,\bs) = \scp{\br}{\bu} +
\scp{\bs}{\bv}$ becomes
$\scp{\br}{\bu + \bv}=0$. Since $u_x + v_{x'} \leq \dpp(x,x')$ for all
$x,x'\in \X$ implies $\bu + \bv
\leq 0$ this yields $\bu = -\bv$. This gives   
\[
    \Phi^*_p(\br, \br) = \left\{ (\bu, -\bu)\in\RR^\X\times\RR^\X : u_x - u_{x'} \leq
    \dpp(x, x'), \:
x,x'\in\X \right\}
\]
and the following more compact
representation of the dual solutions in the case $\br=\bs$, independent of
$\br$:
\begin{equation}
\Phi_p^*(\br,\br) = \Phi^*_p \times \left( -\Phi_p^* \right).
\label{eq:null_Phi*}
\end{equation}
\subsection{Proof of Theorem \ref{thm:full}}
\begin{enumerate}[a)]
  \item With the notation introduced in Theorem \ref{thm:full}, $n\brh_n$ is a
    sample of size $n$ from a multinomial distribution with
    probabilities $\br$. Therefore, $\sqrt{n}(\brh_n - \br)\Rightarrow
    \bm{G}$ as $n\ra \infty$ {\citep[Thm. 14.6]{wasserman_all_2011}}. The
    Hadamard derivative of the
    map $(\br, \bs) \mapsto W_p^p(\br,\bs)$ as given in Theorem
    \ref{thm:derivative_Wasserstein} can now be
    used in the delta-method
    from Theorem \ref{thm:delta_method}. Together with the representation \eqref{eq:null_Phi*} of
    the set of dual solutions $\Phi^*_p(\br, \bs)$, this yields
    \[
      \sqrt{n}W_p^p(\brh_n, \br) \Rightarrow \max_{(\bu, \bv)\in
      \Phi^*_p(\br, \br)} -\scp{\bu}{\bm{G}} \:\simD\: \max_{\bu\in\Phi^*_p}
      \scp{\bu}{\bm{G}}.
    \]
    Here and in the following $Z_1\simD Z_2$ means the distributional equality of the
    random variables $Z_1$ and $Z_2$.
    Applying to this the Continuous Mapping Theorem with the map $t\mapsto
    t^{1/p}$ gives the assertion.
  \item Consider the map
    $(\br, \bs)\mapsto W_p(\br, \bs) = (W_p^p(\br, \bs))^{1/p}$. By Theorem
    \ref{thm:derivative_Wasserstein} and the chain rule for
    Hadamard directional derivatives {\citep[Prop. 3.6]{Shapiro1990}}, the Hadamard derivative of
    this map at $(\br, \bs)$  is
    given by
    \begin{equation}
      (\bh_1, \bh_2) \mapsto  p^{-1}W_p^{1-p}(\br, \bs)\left\{ \max_{(\bu,
      \bv)\in\Phi^*_p(\br,\bs)}
      -(\scp{\bu}{\bh_1} + \scp{\bv}{\bh_2}) \right\}.
      \label{eq:der_alt}
    \end{equation}
    An application of the delta-method of Theorem \ref{thm:delta_method} concludes this part.
  \item and d). Note that under the assumptions of the Theorem 
    \begin{equation}
      \sqrt{\frac{nm}{n + m}}\left( (\brh_n, \bsh_m) - (\br, \bs)  \right)
      \Rightarrow (\sqrt{\lambda} \bm{G}, \sqrt{1 - \lambda}\bm{H}).
      \label{eq:conv_multi_twosample}
    \end{equation}
    Part d) follows with the delta-method from \eqref{eq:der_alt} and
    \eqref{eq:conv_multi_twosample}.

    For part c) we use, as we did for a), the derivative given in Theorem
    \ref{thm:derivative_Wasserstein} and the Continuous
    Mapping Theorem.
    The limit distribution is 
    \[
      \left\{ \max_{(\bu,\bv)\in\Phi^*_p(\br, \bs)} (\sqrt{\lambda}\scp{\bm{G}}{\bu} +
      \sqrt{1 - \lambda}\scp{\bm{H}}{\bv}) \right\}^{1/p}.
    \]
    Note that if $\br = \bs$
    we have $(\bu,\bv)\in\Phi^*_p(\br, \bs)$ if and only if $\bu\in\Phi^*_p$ and
    $\bv = -\bu$, by \eqref{eq:null_Phi*} and \eqref{eq:sets}. Hence, with
    $\bm{G} \simD \bm{H}$ we conclude
    \[
      \begin{split}
        \max_{(\bu,\bv)\in\Phi^*_p(\br, \bs)} (\sqrt{\lambda}\scp{\bm{G}}{\bu} +
        \sqrt{1 - \lambda}\scp{\bm{H}}{\bv}) 
        & \simD \max_{\bu\in\Phi^*_p} (\sqrt{\lambda}\scp{\bm{G}}{\bu} -
        \sqrt{1 - \lambda}\scp{\bm{H}}{\bu}) \\
        & \simD  
         \max_{\bu\in\Phi^*_p} \sqrt{\lambda  + (1 -
        \lambda)}\scp{\bm{G}}{\bu} \\
        & = \max_{\bu\in\Phi^*_p} 
        \scp{\bm{G}}{\bu}.
      \end{split}
    \]
\end{enumerate}

\subsection{Explicit limiting distribution for tree metrics}
\label{sec:trees}
Assume that the metric structure on $\X$ is given by a weighted tree, that is, an
undirected connected graph $\T = (\X, E)$
with vertices  $\X$ and edges $E \subset \X\times \X$ that
contains no cycles. We assume the edges to be weighted by a function
$ w:E \ra \RR_{>0}$.
For $x, x'\in \X$ let $e_1,\dots,e_l\in
E$ be the unique path in $\T$ joining $x$ and $x'$, then the length of this
path, 
$d_\T(x,x') = \sum_{j=1}^l w(e_j)$ 
defines a metric $d_\T$ on $\X$.
Without imposing any further restriction on $\T$, we assume it to be rooted at
$\root(\T)\in \X$, say. Then, for $x\in \X$ and $x\neq\root(\T)$  we may define
$\parent(x)\in \X$ as the
immediate neighbor of $x$ in the unique path connecting $x$ and $\root(\T)$. 
We set $\parent(\root(\T))=\root(\T)$.
We also define $\children(x)$ as the set of vertices $x'\in \X$ such that there
exists a sequence $x' = x_1, \dots , x_l = x \in \X$ with $\parent(x_j) =
x_{j+1}$ for $j=1,\dots,l-1$. Note that with this definition $x\in\children(x)$. Additionally,  
define the  linear operator $S_\T : \RR^\X \ra \RR^\X$ 
\[
  (S_\T \bu)_x = \sum_{x'\in\children(x)} u_{x'}.
\]
\begin{thm}
  \label{thm:trees}
  Let $p\geq 1$, $\br\in \mathcal{P}_\X$, defining a probability distribution on
  $\X$ and
  let the empirical measures  $\brh_n$ and
  $\bsh_m$
  be generated by independent random variables $X_1,\dots, X_n $ and $Y_1 ,
  \dots Y_m$, respectively, all drawn from
  $\br = \bs$. 

  Then, with a Gaussian vector $\bm{G}\sim\mathcal{N}(0, \Sigma(\br))$ as
  defined in
  \eqref{eq:def_Sigma} we have the following.
  \begin{enumerate}[a)]
    \item \textbf{(One sample)} As $n\ra\infty$, 
      \begin{equation}
        n^{\frac{1}{2p}} W_p(\brh_n, \br) \Rightarrow \left\{\sum_{
        x \in \X} |(S_\T \bm{G})_x| d_\T(x,\parent(x))^p\right\}^{\frac{1}{p}}
        \label{eq:weak_conv_trees}
      \end{equation}
    \item \textbf{(Two samples)} If  $n\wedge m\ra\infty$ and
    $n/(n + m)\ra\lambda\in(0,1)$ we have 
      \begin{equation}
        \left( \frac{nm}{n+m} \right)^{\frac{1}{2p}} W_p(\brh_n, \bsh_m)
        \Rightarrow \left\{
          \sum_{ x \in \X} |(S_\T \bm{G})_x|
        d_\T(x,\parent(x))^p\right\}^{\frac{1}{p}}.
      \end{equation}
  \end{enumerate}
\end{thm}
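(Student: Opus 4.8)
The plan is to reduce both statements to the general null-hypothesis limit of Theorem~\ref{thm:full} and then to evaluate the linear program $\max_{\bu\in\Phi_p^*}\scp{\bm{G}}{\bu}$ explicitly using the tree structure. By parts a) and c) of Theorem~\ref{thm:full}---the latter applies because $n/(n+m)\ra\lambda\in(0,1)$ forces $m/(n+m)\ra 1-\lambda\in(0,1)$, and the two-sample null limit there does not depend on $\lambda$---both limits equal $\{\max_{\bu\in\Phi_p^*}\scp{\bm{G}}{\bu}\}^{1/p}$. It therefore suffices to establish the identity
\[
  \max_{\bu\in\Phi_p^*}\scp{\bm{G}}{\bu} = \sum_{x\in\X}|(S_\T\bm{G})_x|\,d_\T(x,\parent(x))^p
\]
and then to invoke the Continuous Mapping Theorem with $t\mapsto t^{1/p}$.

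The crucial step, which I expect to be the \emph{main obstacle}, is to show that on a tree the full family of constraints defining $\Phi_p^*$ collapses to box constraints on the edge increments. Writing $\delta_y=u_y-u_{\parent(y)}$ for each non-root vertex $y$ and $w(y)=d_\T(y,\parent(y))$, I would prove that $\bu\in\Phi_p^*$ if and only if $|\delta_y|\le w(y)^p$ for every $y\neq\root(\T)$. Necessity is immediate by testing the constraint $u_x-u_{x'}\le\dpp(x,x')$ on adjacent pairs in both orientations. For sufficiency I would telescope along the unique tree path $x=z_0,\dots,z_l=x'$, bounding $u_x-u_{x'}=\sum_i(u_{z_{i-1}}-u_{z_i})\le\sum_i w(e_i)^p$, and then invoke the superadditivity inequality $\sum_i w(e_i)^p\le(\sum_i w(e_i))^p=d_\T(x,x')^p$, valid for $p\ge 1$ and nonnegative weights. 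This power inequality is exactly what makes the reduction work and is the tree-specific heart of the argument.

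Having reduced the feasible set, I would change coordinates from $\bu$ to the increments $(\delta_y)_{y\neq\root(\T)}$. Since $\Sigma(\br)\one=0$, the Gaussian vector satisfies $\scp{\bm{G}}{\one}=0$ almost surely, so the objective is invariant under $\bu\mapsto\bu+c\one$ and I may fix $u_{\root(\T)}=0$. Expanding $u_x$ as the sum of the increments $\delta_y$ over the non-root vertices $y$ on the path from $\root(\T)$ to $x$ and interchanging the order of summation gives
\[
  \scp{\bm{G}}{\bu}=\sum_{y\neq\root(\T)}\delta_y\sum_{x\in\children(y)}G_x=\sum_{y\neq\root(\T)}\delta_y\,(S_\T\bm{G})_y,
\]
since $x\in\children(y)$ precisely when $y$ lies on the path from $\root(\T)$ to $x$, so the inner sum is $(S_\T\bm{G})_y$ by the definition of $S_\T$. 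Because the $\delta_y$ are now subject only to the independent box bounds $|\delta_y|\le w(y)^p$, the maximum is attained at $\delta_y=\sgn((S_\T\bm{G})_y)\,w(y)^p$, giving $\sum_{y\neq\root(\T)}|(S_\T\bm{G})_y|\,w(y)^p$. The root term may be reinstated at no cost because $d_\T(\root(\T),\parent(\root(\T)))=0$, which recovers the sum over all of $\X$ and completes the identity. A final application of the Continuous Mapping Theorem then yields both a) and b).
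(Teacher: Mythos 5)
Correct, and essentially the paper's own proof: you reduce to Theorem \ref{thm:full}, collapse the dual constraints $u_x-u_{x'}\le d_\T(x,x')^p$ to edge-wise box constraints by telescoping along the unique tree path together with the superadditivity inequality $\sum_i a_i^p\le\bigl(\sum_i a_i\bigr)^p$ for $p\ge 1$, and then maximize the increment-reparametrized linear objective at $\delta_y=\sgn((S_\T\bm{G})_y)\,d_\T(y,\parent(y))^p$, which is exactly what the paper does via its operators $S_\T,D_\T$ and Lemma \ref{lem:S_D}. The only difference is presentational: you state explicitly that $\scp{\bm{G}}{\one}=0$ almost surely (needed to neutralize the unconstrained root coordinate), a fact the paper uses implicitly when it drops the root term in \eqref{eq:tree_bound_on_max}.
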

The proof of Theorem \ref{thm:trees} is given in the supplementary material. The
theorem includes the special case of a discrete measure on the
real line, that is $\X\subset \RR$, since in this case, $\X$ can be regarded as
a simple rooted tree consisting of only one branch. 
\begin{cor}[{\citealp[Theorem 2.6]{Samworth2004}}]
  \label{cor:samworth}
  Let $\X = \{x_1 < \dots < x_N\}\in\RR$, $\br\in\mathcal{P}_\X$ and $\brh_n$ the
  empirical measure generated by i.i.d. random variables $X_1, \dots ,
  X_n\sim\br$. With $\bar{r}_j = \sum_{i=1}^j r_{x_i} $, for $j=1,\dots N$ and
  $B$ a standard Brownian bridge, we have as $n\ra\infty$, 
  \begin{equation}
    n^{\frac{1}{4}} W_2(\brh_n, \br) \Rightarrow \left\{ \sum_{j=1}^{N-1}
    |B(\bar{r}_j)|
    (x_{j+1} - x_j)^2 \right\}^{\frac{1}{2}}.
    \label{eq:samworth}
  \end{equation}
\end{cor}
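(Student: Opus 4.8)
The plan is to read off the corollary as the specialization of part a) of Theorem \ref{thm:trees} with $p=2$ to the degenerate tree in which $\X$ sits on a single branch. Since $x_1 < \dots < x_N$ lie on the real line, I would equip $\X$ with the path graph whose edges are $\{x_j, x_{j+1}\}$ for $j=1,\dots,N-1$, weighted by $w(\{x_j,x_{j+1}\}) = x_{j+1}-x_j$, so that the induced path metric $d_\T$ coincides with the Euclidean distance restricted to $\X$ and, in particular, $W_2$ with respect to $d_\T$ is the ordinary $W_2$ on $\RR$. Rooting this path at $x_N$, the parent map is $\parent(x_j)=x_{j+1}$ for $j<N$ and $\parent(x_N)=x_N$, whence $d_\T(x_j,\parent(x_j))^2 = (x_{j+1}-x_j)^2$ for $j<N$ while the $j=N$ term vanishes because $d_\T(x_N,x_N)=0$. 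Moreover $\children(x_j)=\{x_1,\dots,x_j\}$, so the operator $S_\T$ of Theorem \ref{thm:trees} turns $\bm{G}$ into its partial sums, $(S_\T\bm{G})_{x_j} = \sum_{i=1}^{j} G_{x_i}$. Substituting this into \eqref{eq:weak_conv_trees} with $p=2$ produces the weak limit $\{\sum_{j=1}^{N-1} |\sum_{i=1}^j G_{x_i}|\,(x_{j+1}-x_j)^2\}^{1/2}$.

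It then remains only to identify the law of the Gaussian partial-sum vector $(\sum_{i=1}^j G_{x_i})_{j=1}^{N-1}$ with that of $(B(\bar r_j))_{j=1}^{N-1}$. Writing the multinomial covariance \eqref{eq:def_Sigma} as $\Sigma(\br) = \mathrm{diag}(\br) - \br\br^\top$ and letting $\one_{[1:j]}\in\RR^\X$ denote the indicator of $\{x_1,\dots,x_j\}$, I would compute, for $j\le k$,
\[
  \cov\Big(\sum_{i=1}^j G_{x_i},\ \sum_{i=1}^k G_{x_i}\Big) = \one_{[1:j]}^\top \Sigma(\br)\,\one_{[1:k]} = \bar r_j - \bar r_j\bar r_k = \bar r_j(1-\bar r_k),
\]
using $\one_{[1:j]}^\top\mathrm{diag}(\br)\one_{[1:k]} = \bar r_{j\wedge k}$ and $\one_{[1:j]}^\top\br = \bar r_j$. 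This is exactly $\cov(B(\bar r_j),B(\bar r_k)) = \bar r_{j\wedge k}(1-\bar r_{j\vee k})$ for a standard Brownian bridge, and since both vectors are centered Gaussian they are equal in distribution. (Consistently, $\sum_{i=1}^N G_{x_i}=0$ almost surely matches $B(\bar r_N)=B(1)=0$.)

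Finally, the map $g(z_1,\dots,z_{N-1}) = \{\sum_{j=1}^{N-1} |z_j|\,(x_{j+1}-x_j)^2\}^{1/2}$ is deterministic and continuous, so the distributional identity of the two Gaussian vectors transfers through $g$ to give $\{\sum_{j=1}^{N-1}|\sum_{i=1}^j G_{x_i}|(x_{j+1}-x_j)^2\}^{1/2} \simD \{\sum_{j=1}^{N-1}|B(\bar r_j)|(x_{j+1}-x_j)^2\}^{1/2}$, which is the claimed limit \eqref{eq:samworth}. I do not anticipate a genuine obstacle beyond bookkeeping: the single content-carrying step is the covariance identity above, and the only point to verify carefully is that the chosen root and the resulting children-sets reproduce precisely the partial sums indexing the bridge at the cumulative weights $\bar r_j$ (rooting at $x_1$ instead would replace the partial sums $\sum_{i\le j}$ by the tail sums $\sum_{i>j} = -\sum_{i\le j}$, which have the same absolute value since $\sum_{i=1}^N G_{x_i}=0$, so the limit is unchanged).
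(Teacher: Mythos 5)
Your proposal is correct and follows essentially the same route as the paper's own proof: both specialize Theorem \ref{thm:trees} to the path tree rooted at $x_N$, identify $(S_\T\bm{G})_{x_j}$ with the partial sums $\sum_{i\le j}G_{x_i}$, and verify that their covariance $\bar r_j(1-\bar r_k)$ matches that of $(B(\bar r_j))_j$. Your quadratic-form computation $\one_{[1:j]}^\top\Sigma(\br)\one_{[1:k]}$ is just a tidier packaging of the paper's term-by-term expansion, and your remarks on the vanishing root term and the choice of root are harmless bookkeeping the paper leaves implicit.
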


\section{Simulations and applications}
\label{sec:appl}
The following numerical experiments were performed using R
\citep{good_r_core_team_r:_2016}.
All computations of Wasserstein distances and optimal transport plans as well as
their visualizations were performed with the R-package \texttt{transport}
\citep{Schuhmacher2014,Gottschlich2014}.
The code used for the computation of the limiting distributions is available as
an R-package \texttt{otinference} \citep{sommerfeld_otinference:_2017}.
\subsection{Speed of convergence}
We investigate the speed of convergence to the limiting distribution in Theorem
\ref{thm:full} in the one-sample case under the null hypothesis. To this end, we
consider as ground space $\X$ a regular two-dimensional $L\times L$ grid with
the
euclidean distance as the metric $d$ and $L= 3, 5, 10$. We generate five random measures
$\br$ on $\X$ as realizations of a Dirichlet random variable with concentration
parameter $\bm{\alpha} = (\alpha, \dots, \alpha)\in\RR^{L\times L}$ for
$\alpha = 1, 5, 10$. Note, that $\alpha = 1$ corresponds to a uniform
distribution on the probability simplex.
For each measure, we generate $20,000$ realizations of $n^{1/2p}W_p(\brh_n,
\br)$ with $n \brh_n\sim\mathrm{Multinom}(\br)$ for
$n =  10, 1000, 1000$ and of the theoretical
limiting distribution given in Theorem  \ref{thm:full}. The Kolmogorov-Smirnov
distance (that is, the maximum absolute difference between their cdfs) between
these two samples (averaged over the five measures) is shown in Figure
\ref{fig:conSpeed}.
\begin{figure}
\thisfloatpagestyle{empty}
  \centering
  \includegraphics[width=0.45\textwidth]{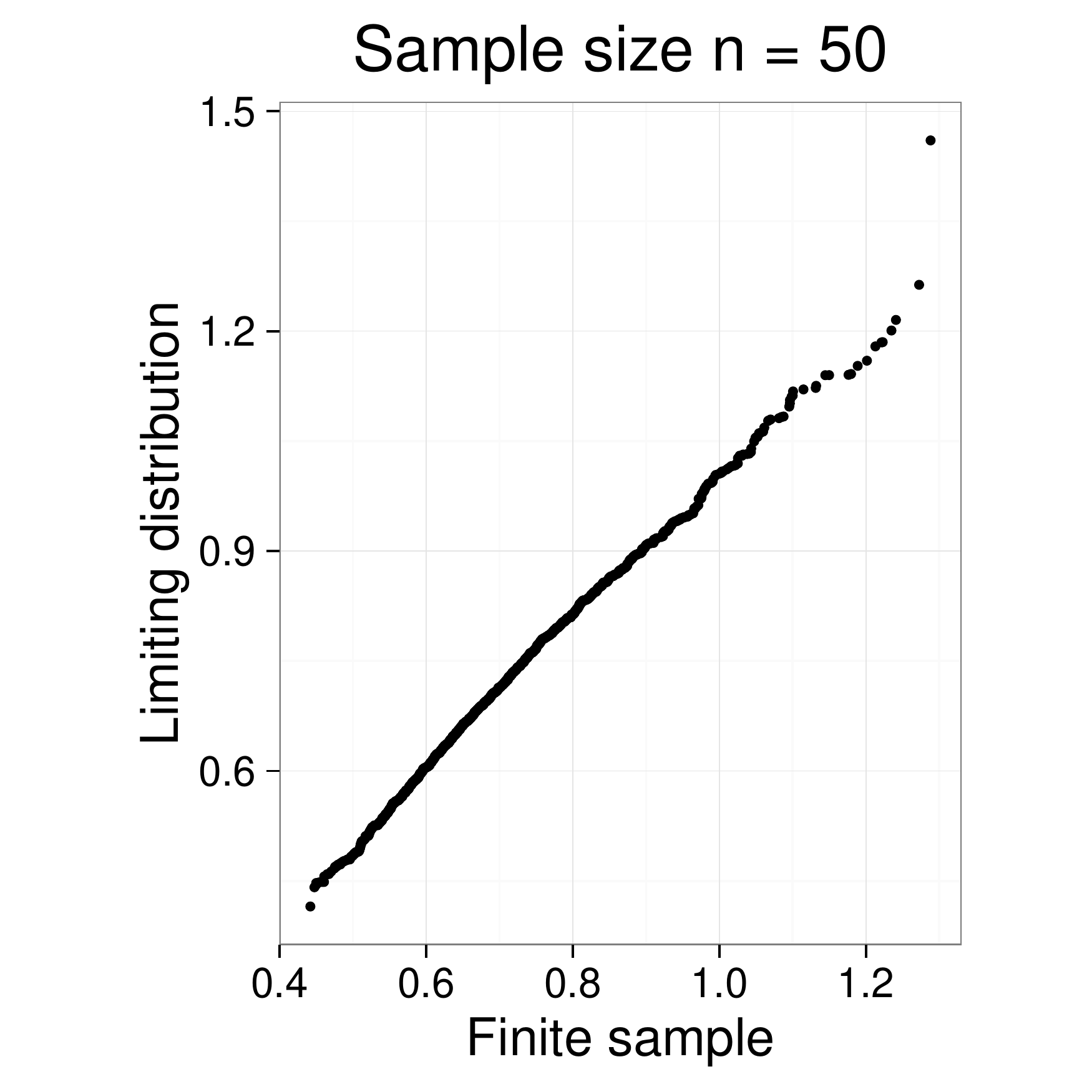}
  \hfill
  \includegraphics[width=0.45\textwidth]{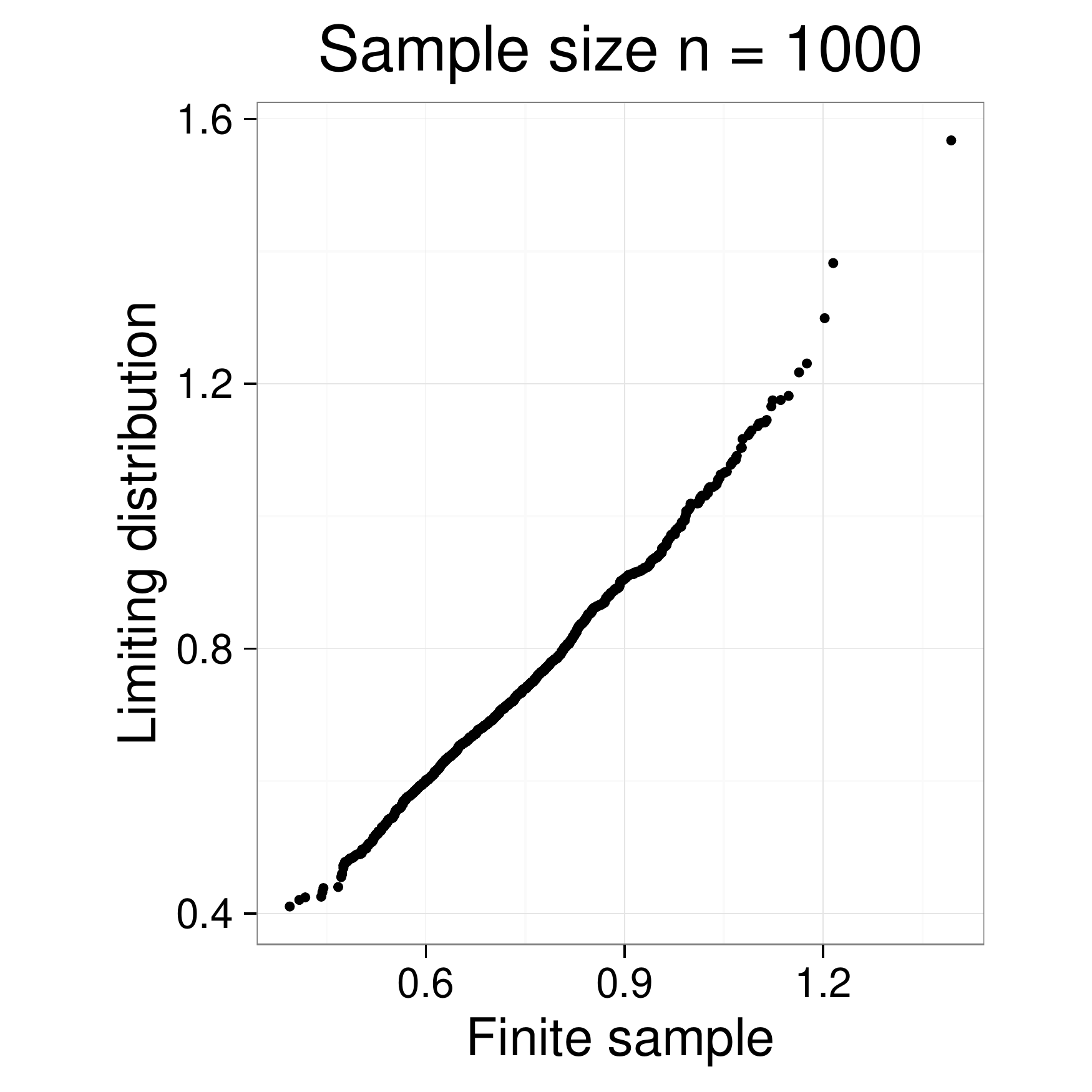}\\
  \vfill
  \includegraphics[width=0.49\textwidth]{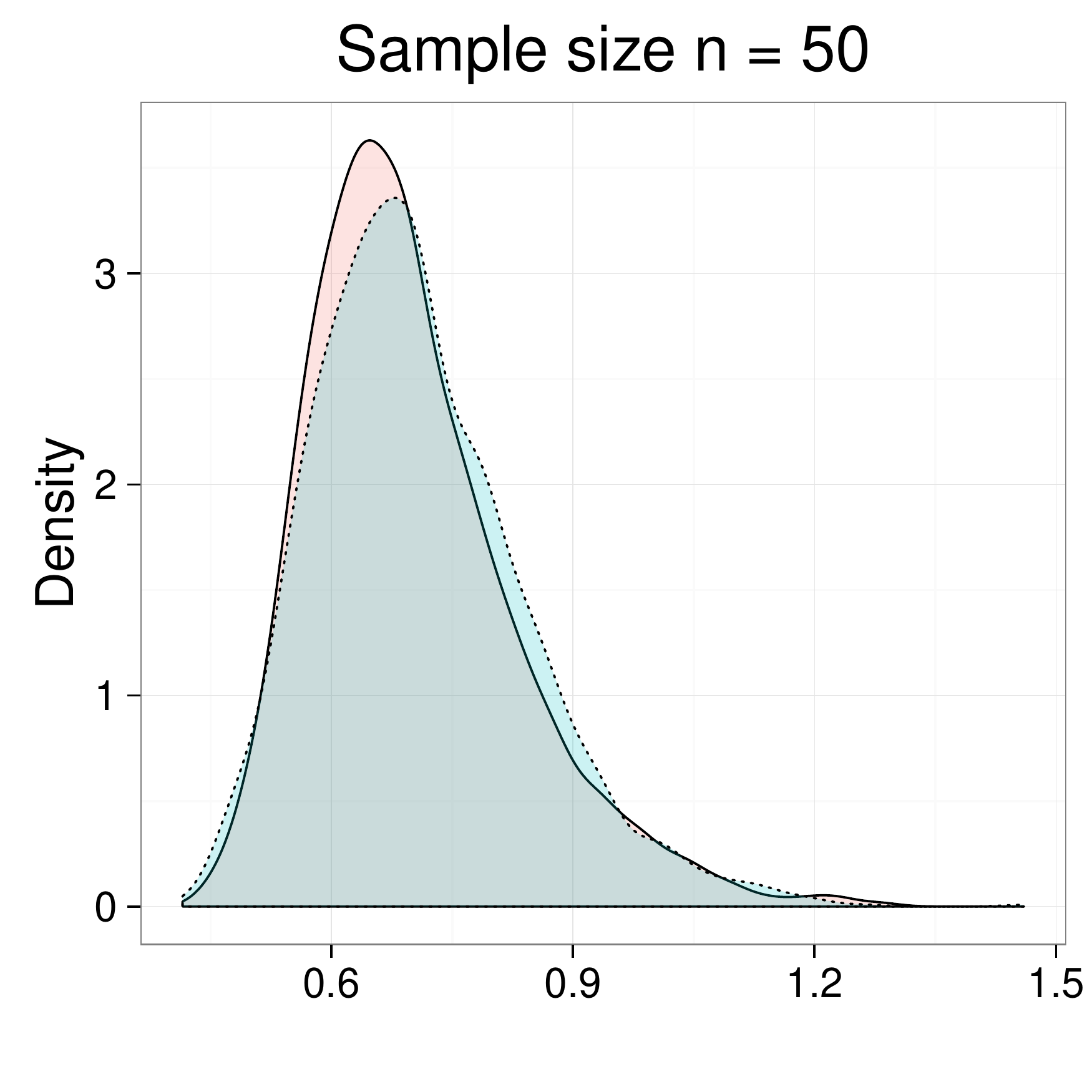}
  \hfill 
  \includegraphics[width=0.49\textwidth]{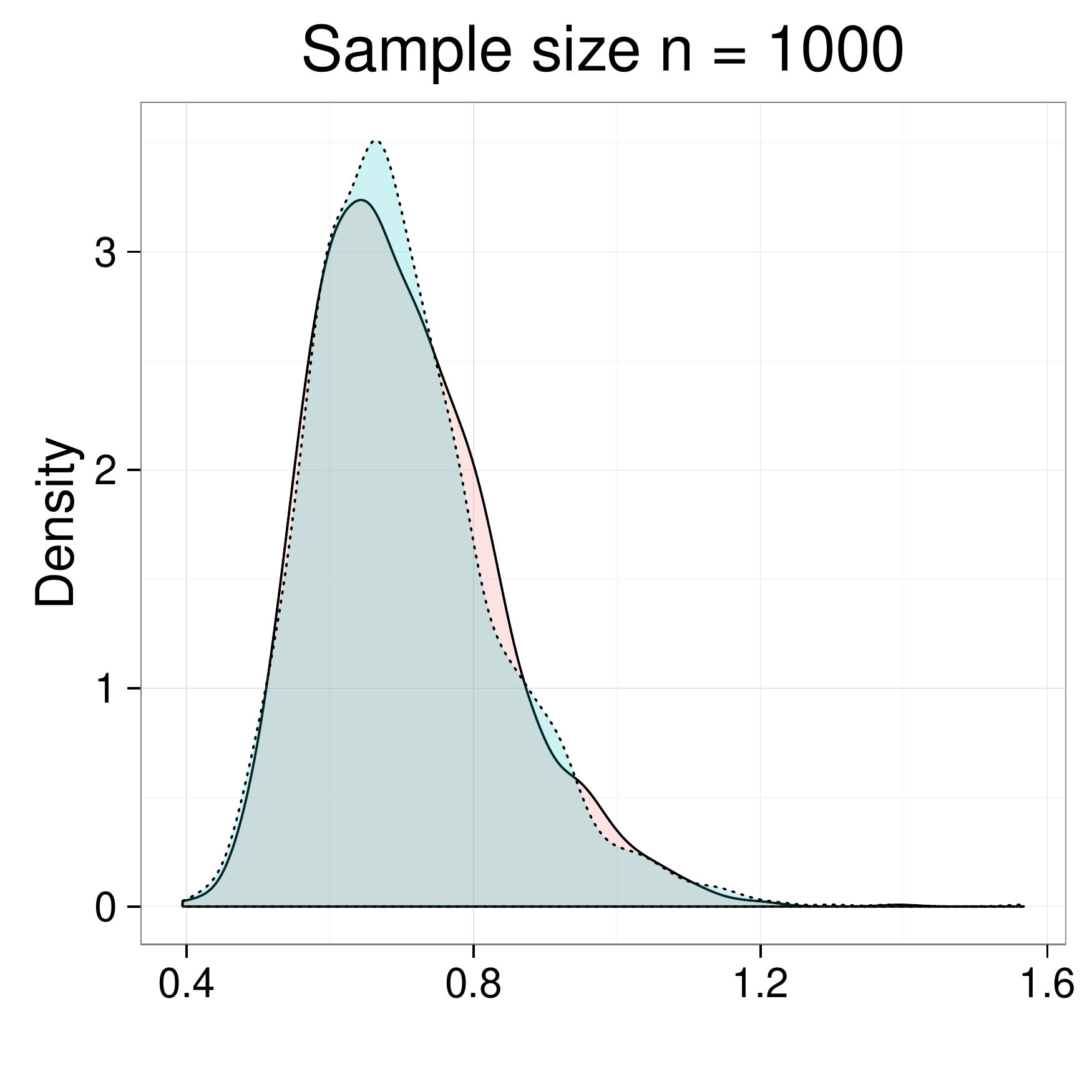}\\
  \vfill
  \includegraphics[width=0.95\textwidth]{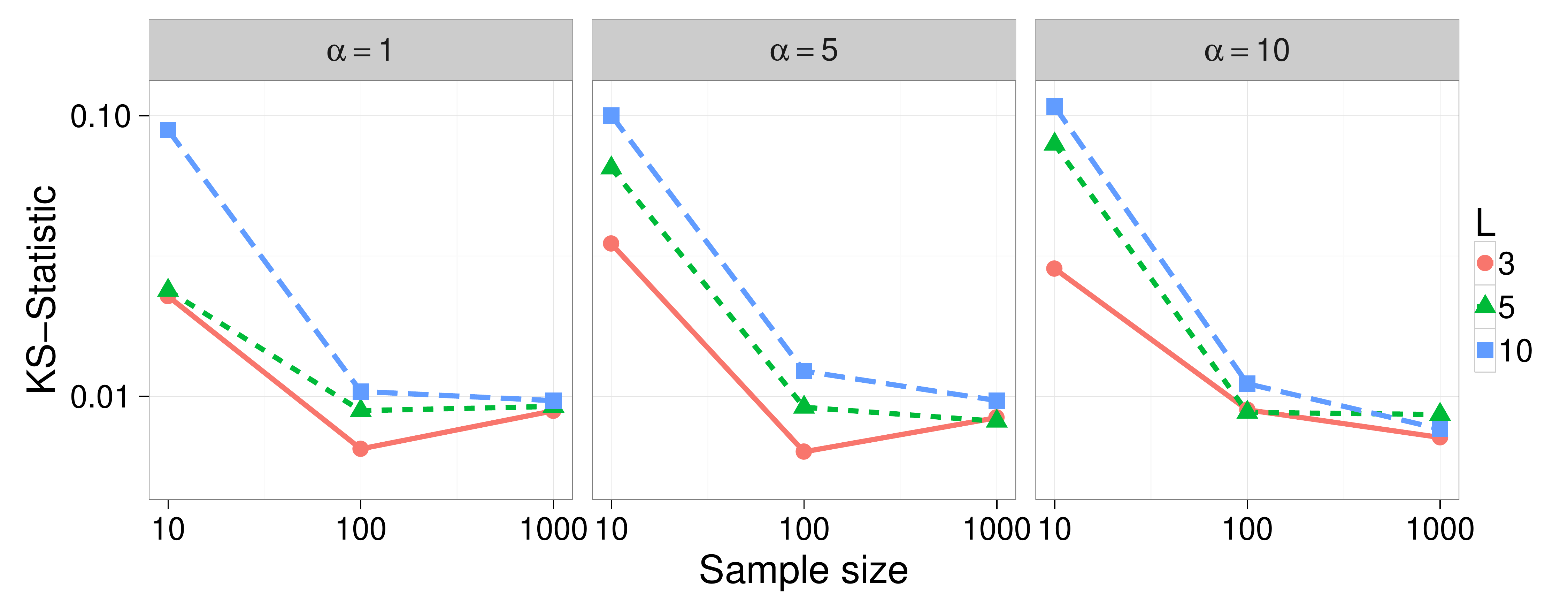}
  \caption{Comparison of the finite sample distribution and the theoretical
  limiting distribution on a regular grid of length $L$ for different sample
sizes. The two top rows show Q-Q-plots and kernel density estimates (bandwidth:
Silverman's rule of thumb \citep{silverman_density_1986}, solid line: finite
sample, dotted line: limiting distribution) for $L = 10$. Last row shows the KS
statistic between the two distributions as a function of the sample size for
different $L$ and for different concentration parameters $\alpha$.}
  \label{fig:conSpeed}
\end{figure}
The experiment shows that the limiting distribution is a good approximation of
the finite sample version even for small sample sizes. For the considered
parameters the size of the ground space $N=L^2$ seems to slow the convergence only
marginally. Similarly, the underlying measure seems to have no sizeable effect
on the convergence speed as the dependence on the concentration parameter
$\alpha$ demonstrates.
\subsection{Testing the null: real and synthetic fingerprints}
\label{sub:FP}
The generation and recognition of synthetic fingerprints is a topic of great
interest in forensic science and current state-of-the-art methods \citep{cappelli_synthetic_2000} produce
synthetic
fingerprints that even human experts fail to recognize as such
{\citep[p. 292ff]{maltoni_handbook_2009}}. Recently,
\cite{gottschlich_separating_2014} presented a method using the Wasserstein
distance that is able to distinguish synthetic from real fingerprints with high
accuracy. Their method is probabilistic in nature, since it is based on a
hypothesized unknown distribution of certain features of the fingerprint. We use
our distributional limits to assess the statistical significance of the
differences.

\paragraph{Minutiae histograms} The basis for the comparison of fingerprints are
so called minutiae which are key qualities in biometric identification based on
fingerprints \citep{jain_technology:_2007}. They are certain characteristic
features such as bifurcations of the line patterns of the
fingerprint. Each of the minutiae have
a location in the fingerprint and a direction such that it can be
characterized by two real numbers and an angle. Figure \ref{fig:minutiae} shows
a real and a synthetic fingerprint with their minutiae.

The recognition method of \cite{gottschlich_separating_2014} considers pairs of
minutiae and records their distance and the difference between their angles.
Based on these two values each minutiae pair is put in one of 100 bins arranged
in a regular grid (10 directional by 10 distance bins) to obtain a so called
minutiae histogram (MH). Based on the bin-wise mean of MHs for several
fingerprints to construct a typical MH, they found that  the
proximity in Wasserstein distance  to these references is a
good classifier for distinguishing real and synthetic fingerprints.

In order to assess the statistical significance of the difference in minutiae
pair distributions,
we consider fingerprints from the databases 1 and 4 of the
Fingerprint
Verification Competition of 2002 \citep{maio_fvc2002}, containing
$110$ real and
synthetic fingerprints, respectively. From each database the minutiae were
obtained by
automatic procedure  using a commercial off-the-shelf program. For each
fingerprint
we chose disjoint minutiae pairs at random  to avoid the issue of pairs being
dependent yielding a total of 1917 and 1437 minutiae pairs from real and
synthetic fingerprints, respectively.
\begin{figure}
  \centering
  \includegraphics[width=0.7\textwidth]{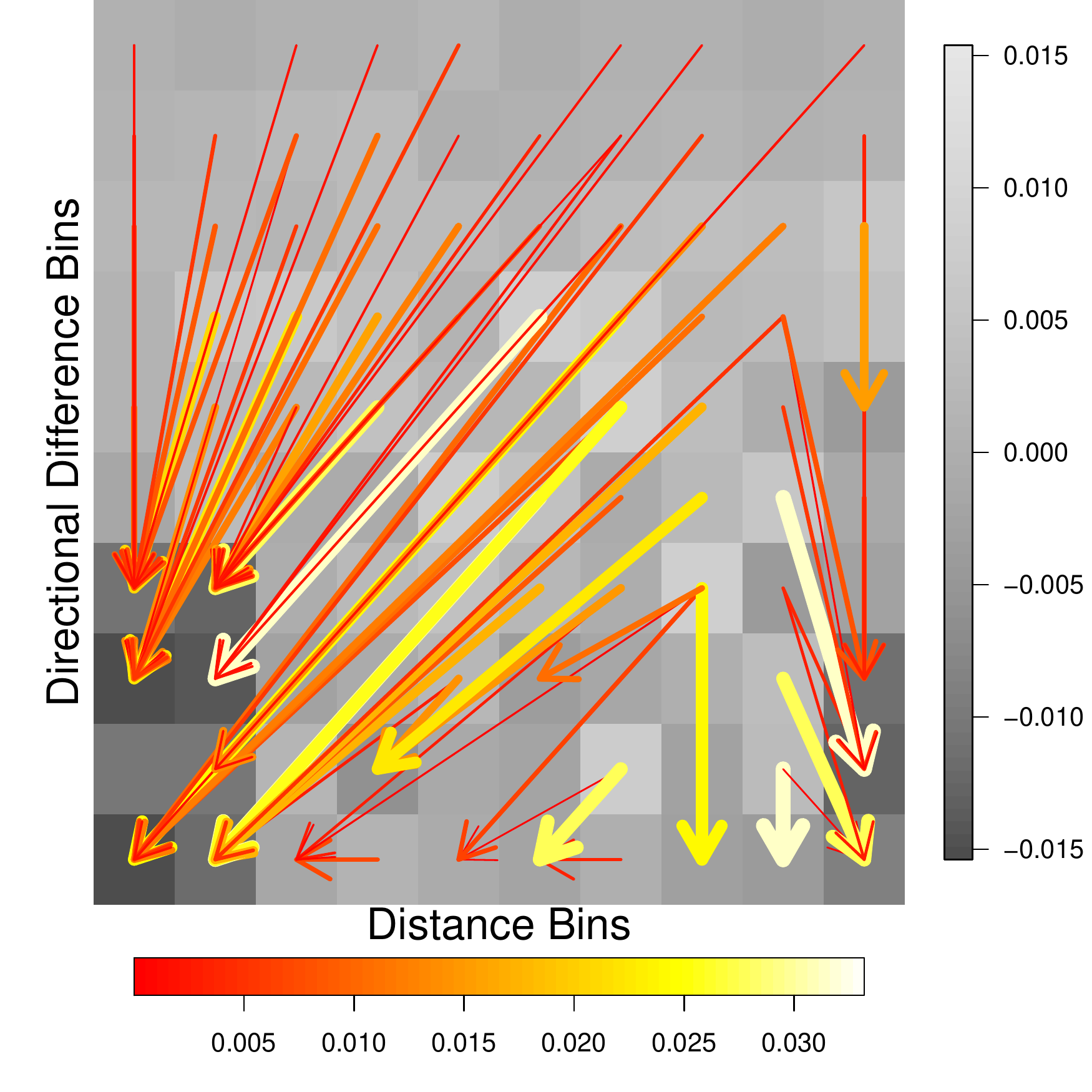}
  \caption{The optimal transport plan between the MHs of real and fake
  fingerprints. The grey values indicate the magnitude of the difference of the
  two MHs. The arrows show the transport. The amount of mass transported is
encoded in the color and thickness of the arrows.}
  \label{fig:FPtransport}
\end{figure}

While two-sample tests for univariate data are  abundant and well studied there
are no multivariate methods that could be considered standard in this setting. 
Therefore, we report on the findings of several tests from the literature for
comparison with the Wasserstein based method from \eqref{eq:two_sample_null}.
We tested the null hypothesis of the underlying distributions being equal 
for the un-centered, the centered and the centered and scaled (to variance
$1$) data to assess effects beyond first
moments using the following methods: 1) comparing the empirical Wasserstein
distance $W_1$ after binning on a regular $10\times 10$ grid with the limiting
distribution from Theorem \ref{thm:full}; 2) a permutation test; 3) the
crossmatch test proposed by \citep{rosenbaum_exact_2005} and 4) the kernel based
test \citep{anderson_two-sample_1994}  implemented in the R package \textup{ks}.

Table
\ref{tab:FPResults} shows
the resulting empirical distributions on a $10\times 10$ grid and the $p$-values
for the different tests.
\begin{table} 
\caption{\label{tab:FPResults} Results of different two-sample tests for difference in the
distribution of MHs of real and fake fingerprints.}
        \begin{tabular}{rrrrr}
          & Wasserstein & Crossmatch & Permutation & KDE \\ 
          \hline
          Raw & 0.00E+00 & 2.99E-01 & 1.00E-03 & 1.12E-08 \\ 
          Centered & 4.00E-04 & 4.48E-05 & 1.00E-03 & 2.60E-21 \\ 
          Centered \& Scaled & 2.54E-02 & 1.01E-02 & 1.71E-01 & 1.79E-14 \\ 
        \end{tabular}
\end{table}
The  differences are highly significant according to all tests, except the
permutation test for the centered and scaled data. In this  particular example  at least,
the Wasserstein based
test seems to be able to pick up differences in distributions (in the first
moment and beyond) at least as good as current state-of-the-art methods.  

In addition to testing, the Wasserstein method provides us with an optimal
transport plan, transforming one measure into the other. For the minutiae
histograms under consideration this is illustrated in Figure
\ref{fig:FPtransport}. This transport plan gives information beyond a simple
test for equality as it highlights structural changes in the distribution.
In this specific application it reveals how in the minutiae histogram of
synthetic fingerprints compared to the one of real fingerprints mass has been
shifted from large and intermediate directional differences to smaller ones. In
particular to small and large distances, and only to a lesser extent to
intermediate distances. In conclusion one may say that synthetic fingerprints
show smaller differences in the directions of minutiae and stronger clustering
of minutiae distances around small and large values.
Insight of this sort may lead to improved generation or detection of synthetic
fingerprints.
\begin{figure}
  \centering
  \includegraphics[width=0.49\textwidth,height=0.18\textheight]{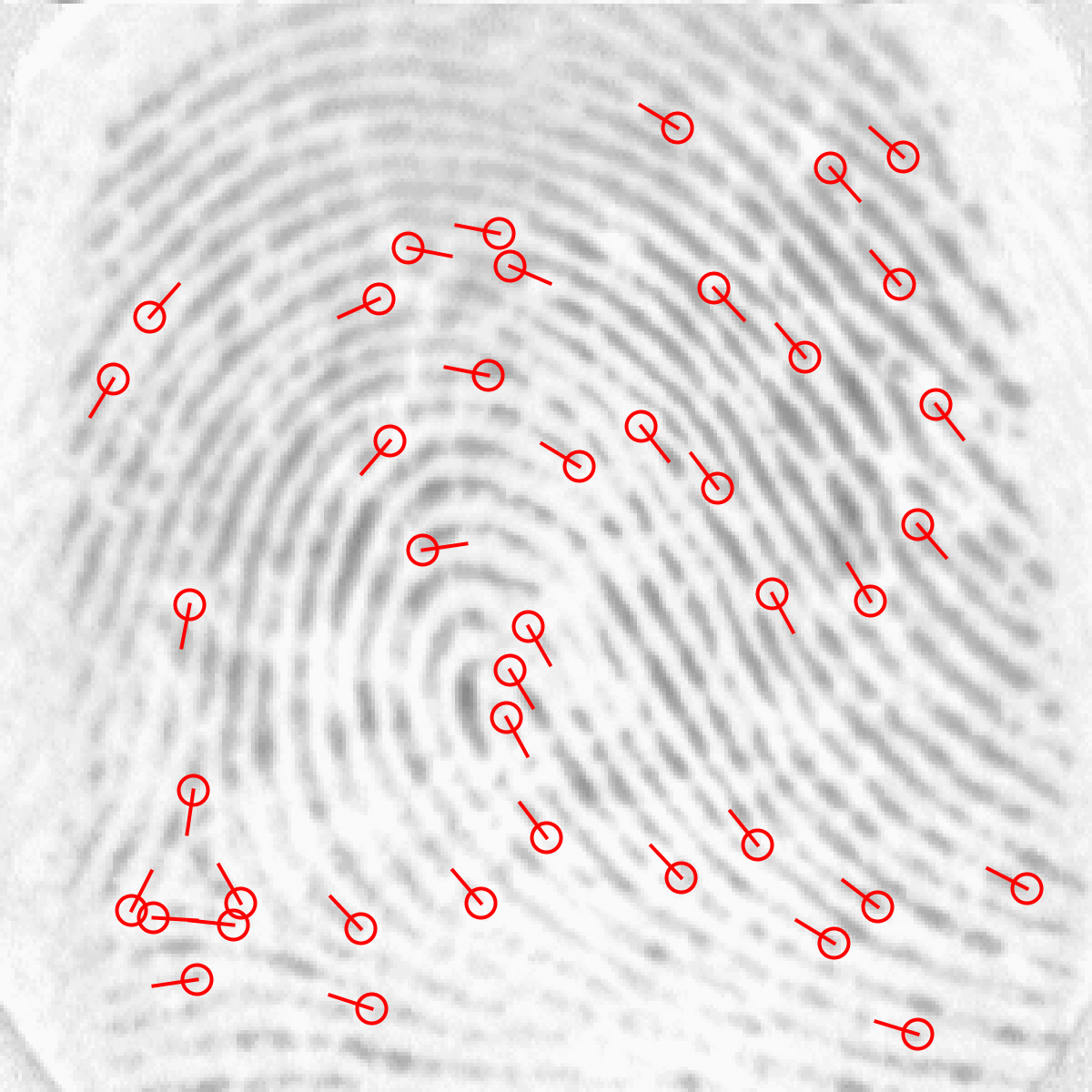}
  \includegraphics[width=0.49\textwidth,height=0.18\textheight]{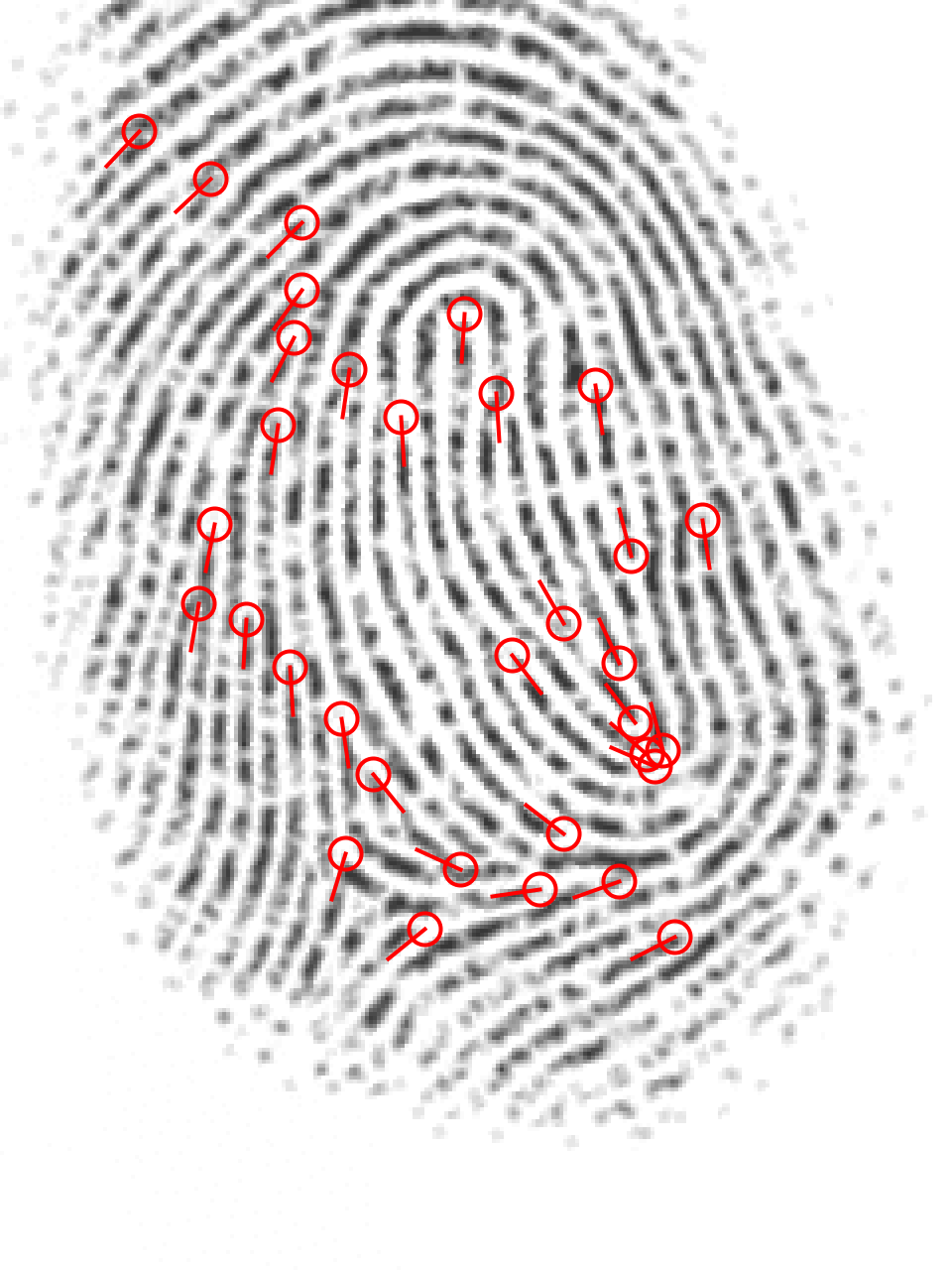}
  \\
    \includegraphics[width=0.95\textwidth]{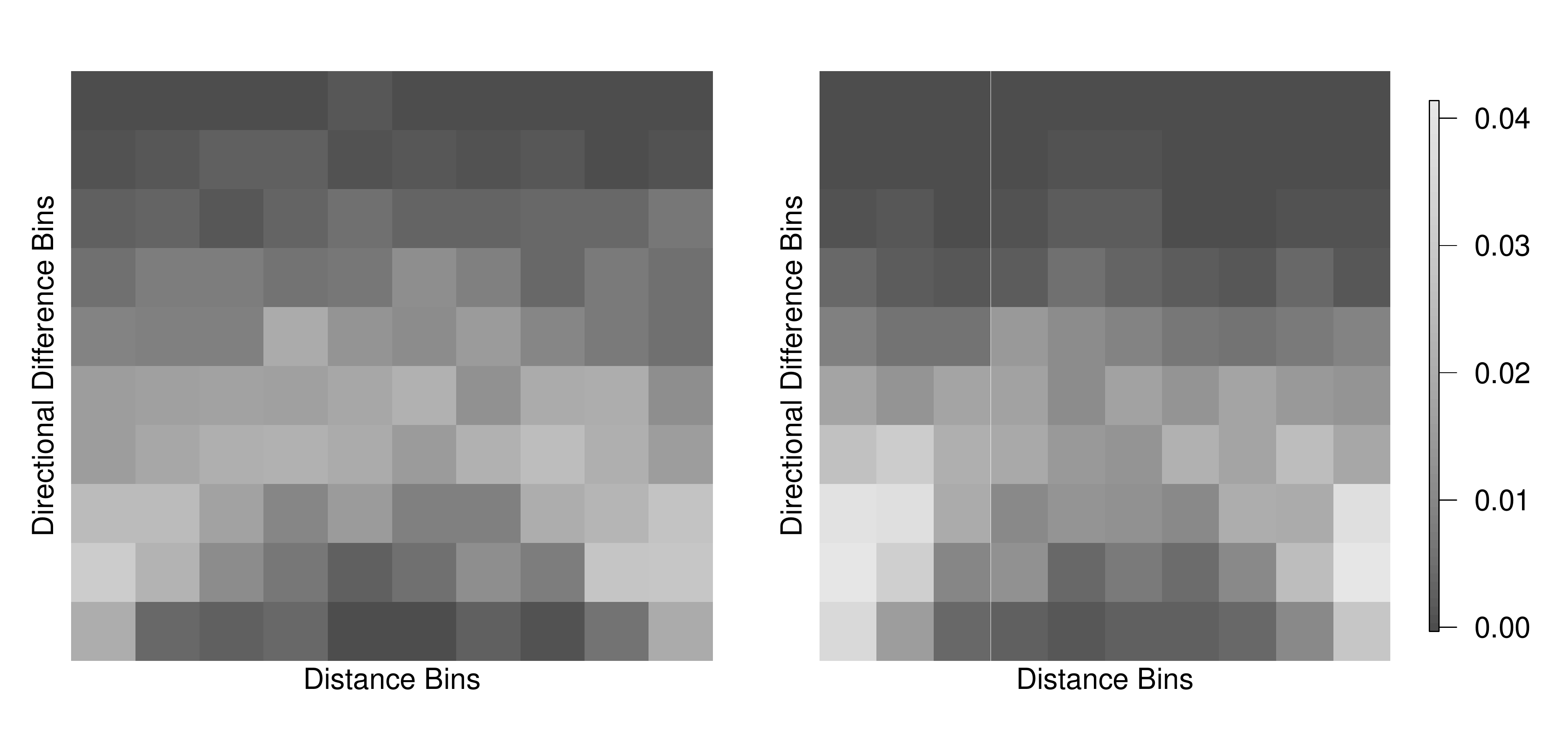}\hfill{}
  \caption{Top row: Minutiae of a real (left) and a synthetic (right)
  fingerprint. Bottom row: Minutiae histograms of real and synthetic
fingerprints.}
  \label{fig:minutiae}
\end{figure}

\subsection{Asymptotic under the alternative: metagenomics}
\label{sub:appl_alt}
\begin{figure} 
  \includegraphics[width=0.66\textwidth]{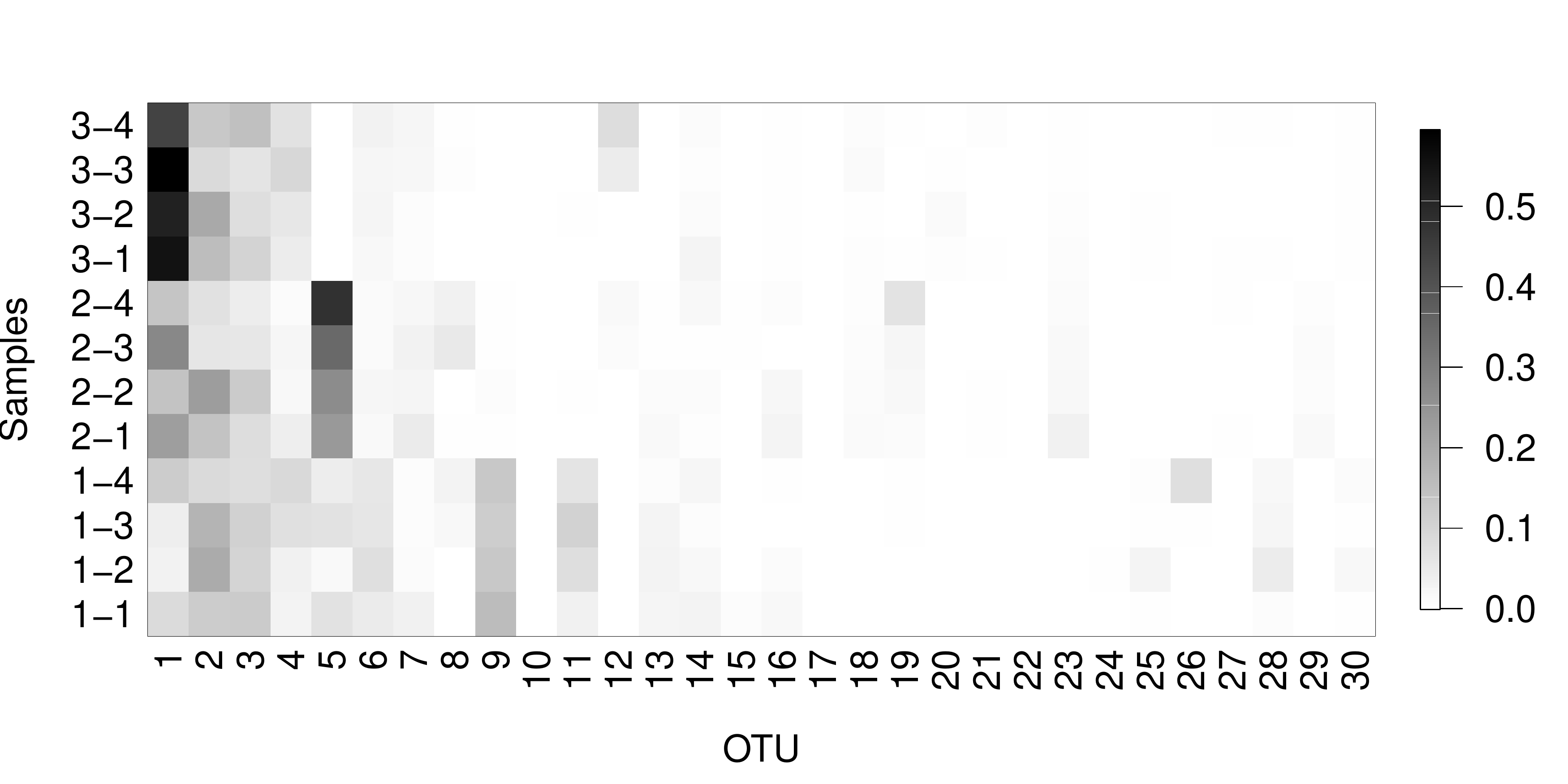}
  \includegraphics[width=0.33\textwidth]{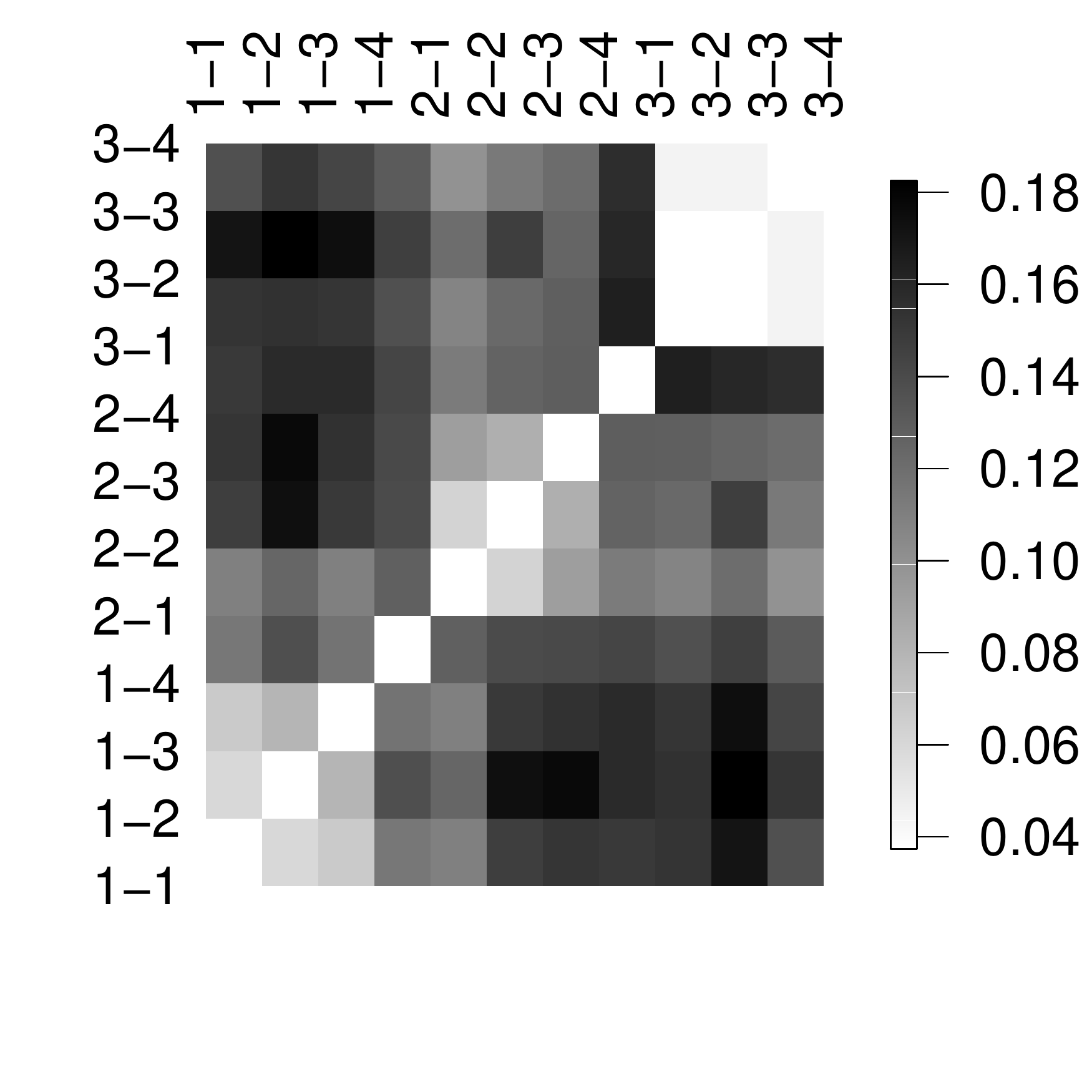}
    \caption{Relative abundances of the $30$ first OTUs in the $12$ samples
      (left) and
     Wasserstein distances of the  microbial communities (right). Here, $ij$ is
   the $j$-th sample of the $i$-th person.}
  \label{fig:stoolResults} 
\end{figure}
Metagenomics studies microbial communities by analyzing genetic
material in an environmental sample such as a stool sample of a human.
High-throughput sequencing techniques no longer require cultivated cloned
microbial cultures  to perform sequencing. Instead, a sample with potentially
many different species can be analyzed directly and the abundance of each
species in the sample can be recovered.
The applications of this technique are countless and constantly growing.
In particular, the composition of microbial communities in the human gut has
been associated with obesity, inflammatory bowel disease and others
\citep{turnbaugh_human_2007}.

The analysis of  a sample with high-throughput sequencing techniques yields
several thousands to many hundreds of thousands sequences. After elaborate
pre-processing, these sequences are aligned to a reference database and
clustered in \textit{operational taxonomic units} (OTUs). These OTUs can be
thought of (albeit omitting some biological detail) as the different species present
in the sample. 
For each OTU this analysis yields the number of sequences
associated with it, that is how often this particular OTU was detected in the
sample. Further, comparing the genetic sequences associated with an OTU yields a
biologically meaningful measure of similarity between OTUs - and hence a
distance. A metagenomic sample can therefore be regarded as a sample in a
discrete metric space with OTUs being the points of the space.
Comparing such samples representing microbial communities is of great interest
\citep{kuczynski_microbial_2010}. The Wasserstein distance has been
recognized to provide valuable insight and to facilitate tests for equality of
two communities \citep{evans_phylogenetic_2012}. This previous application however, relies on a
phylogenetic tree that is build on the OTUs and the distance is then measured in
the tree. This additional pre-processing step involves many parameter choices and
is unnecessary with our method.

A further drawback of the method of \cite{evans_phylogenetic_2012} is that it
only allows for testing
the null hypothesis of two communities being equal. In practice, one frequently
finds that natural variation is so high that even two samples from the same
source taken at different times will be recognized as different. This raises the
question whether variation within  samples from the same source is smaller than
the difference to samples of another source. Statistically speaking we are
looking for confidence sets for differences which are assumed to be different
from zero. This requires asymptotics under the alternative $\br \neq\bs$, which
is provided by Theorem \ref{thm:full}.
\begin{figure}
  \centering
  \includegraphics[width=0.99\textwidth]{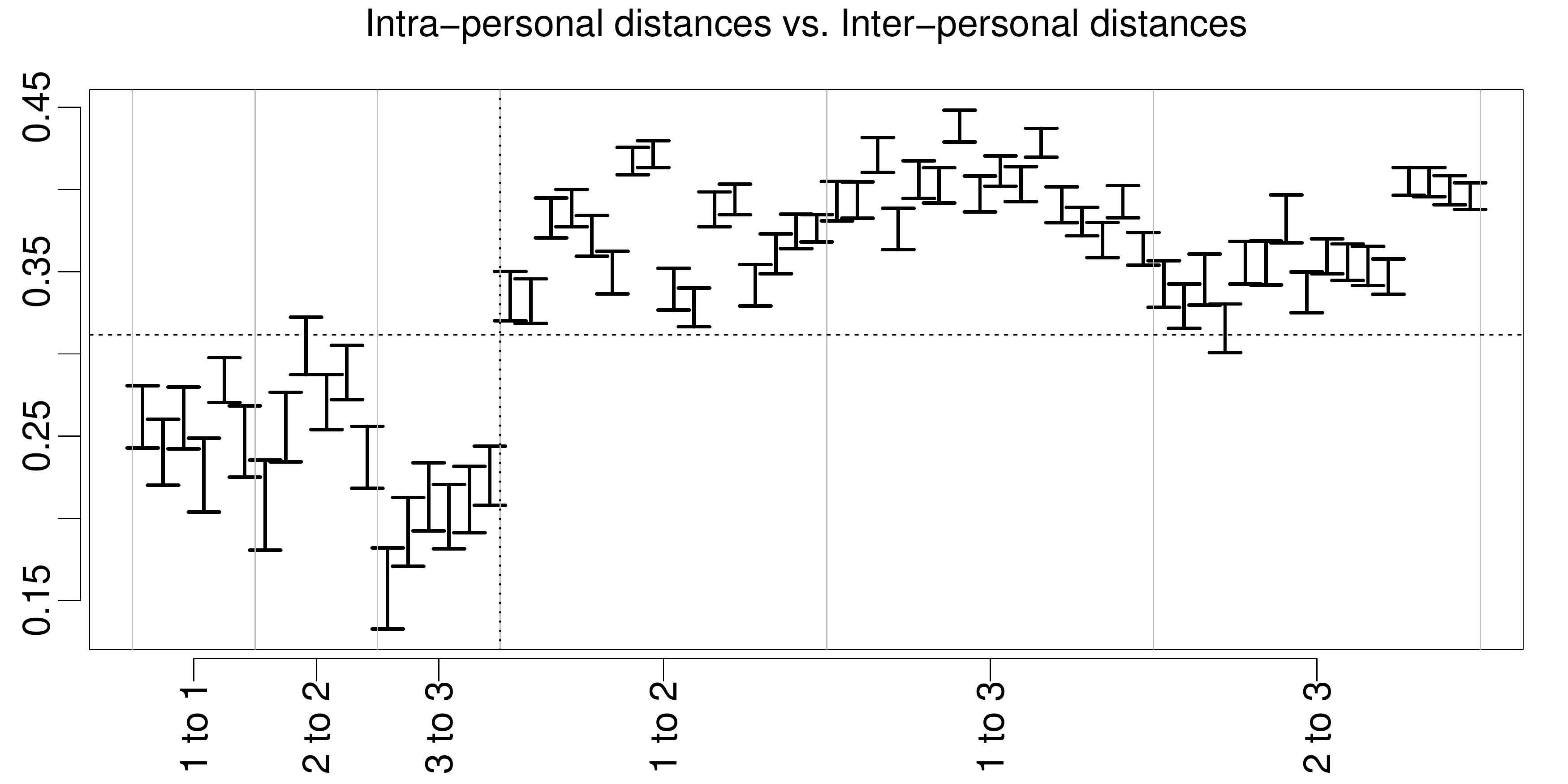}
  \caption{Display of $95\%$ confidence
    intervals of    Wasserstein distances of
  microbial communities. The horizontal axis shows which person pair the
distances belong to (separated by gray vertical lines). The dotted vertical line
separates intra- (left) from inter- (right) -personal distances.}
  \label{fig:stoolIntraInter}
\end{figure}

\paragraph{Data analysis}
We consider part of the data of \cite{costello_bacterial_2009}. Four stool
samples were taken
from each of three persons at different times. We used the preparation of this
data by P. Schloss available at
\url{https://www.mothur.org/w/images/d/d8/CostelloData.zip}. The reads were
pre-processed with the
program \texttt{mothur} \citep{schloss_introducing_2009} using the
procedure outlined in \cite{schloss_reducing_2011} and \cite{454}. 
The relative abundances of the $30$ most frequent OTUs and the Wasserstein-2
distances of the microbial communities are shown in Figure
\ref{fig:stoolResults}. In this and all other figures we use $i-j$ to denote
sample $j$ of person $i$. Note that it is typical for this data that most of the
mass is concentrated on a few OTUs.

The Wasserstein-2 distances  for
all $66$ pairs and their $99\%$ confidence intervals were computed using the
asymptotic distribution in Theorem \ref{thm:full}. The results are shown in
Figure \ref{fig:stoolIntraInter}. The entire analysis took less than a minute on
a standard laptop. The confidence intervals show that intra-personal distances
are in fact significantly smaller than inter-personal distances.

\section{Discussion}
\label{sec:dis}
We discuss limitations, possible extensions of the presented work and promising directions
for future research.

\paragraph{Beyond finite spaces I: rates in the finite and the continuous
setting ($d=1$)}
The scaling rate in Theorem \ref{thm:full} depends solely on $p$ and is
completely independent of the underlying space $\X$. This contrasts known bounds
on the rate of convergence in the continuous case (see references in the
Introduction),
which exhibit a strong dependence on the dimension of the space and the moments
of the distribution.

Under the null hypothesis (that is, the two underlying population measures are equal) and  when $\X=\RR$ and $p=2$, the scaling rate for a continuous distribution is known
to be $n^{1/2}$, at least under
additional tail conditions (see e.g. \cite{Barrio2005}). This means that
in this case the scaling rate for a discrete distribution is slower
(namely $n^{1/4}$). Under the alternative (different population measures) the scaling rate is $n^{1/2}$ and coincide in the discrete and the continuous
case (see  \cite{Munk1998}).

\paragraph{Beyond finite spaces II: higher dimensions ($d\geq 2$)} 
For a continuous measure $\mu$ the Wasserstein distance is the solution of an
infinite-dimensional optimization problem. Although differentiability results
also exist for such problems (e.g. \cite{shapiro_perturbation_1992}), there are strong indications that the
argument presented here cannot carry over to the this case for $d\geq 2$. This is
most easily seen from the classical results of \cite{ajtai_optimal_1984}. We consider
the uniform distribution on the unit square. For two samples of size $n$ independently drawn
from this distribution, \cite{ajtai_optimal_1984} showed that there
exist constants $C_1,C_2$ such that the 1-Wasserstein distance $\hat{W}_1^{(n)}$ between them
satisfies 
\[
  C_1 n^{-1/2}(\log n)^{1/2} \leq \hat{W}_1^{(n)} \leq C_2 n^{-1/2}(\log n)^{1/2}
\]
with probability $1 - o(1)$. Hence, for $c_n\hat{W}_1^{(n)}$ to have a non-degenerate limit,
we need $c_n = \sqrt{n / \log n}$. However, a common property of all
delta-methods is that they preserve the rate of convergence, which is 
not satisfied here. 
\paragraph{Transport distances on trees}
Complementing our Theorem \ref{thm:trees} a further result on transport distances on trees was proven by
\cite{evans_phylogenetic_2012} in the context of phylogenetic trees for the
comparison of metagenomic samples (see also our application in Section
\ref{sec:appl}). They point out that the Wasserstein-1 distance on trees is
equal to the so-called \textit{weighted uni-frac distance} which is very popular
in genetics. Inspired by this distance they give a formal generalization
mimicking a cost exponent $p>1$ and consider its asymptotic behavior. However,
as they remark, these generalized
expressions are no longer related (beyond a formal resemblance) to Wasserstein
distances with cost exponent $p>1$. Comparing the performance of their ad-hoc
metric and the true Wasserstein distance on trees that is under consideration
here is an interesting topic for further research. 

\paragraph{Bootstrap}
We showed that while the naive $n$-out-of-$n$ bootstrap fails for the
Wasserstein distance (Section \ref{sub:boot}), the $m$-out-of-$n$ bootstrap is
consistent. An interesting and challenging question is how $m$ should
be chosen.

\paragraph{Wasserstein barycenters}
Barycenters in the Wasserstein space \citep{agueh_barycenters_2011} have
recently received much attention {\citep{cuturi_fast_2014,
del_barrio_statistical_2015}. We expect that the techniques developed here can
be of use in providing a rigorous statistical theory (e.g. distributional
limits).
The same applies to geodesic principal component analysis in the Wasserstein
space \citep{bigot_geodesic_2013,seguy_algorithmic_2015}.

\paragraph{Alternative cost matrices and transport distances}
Theorem \ref{thm:full} holds in very large generality for arbitrary cost
matrices, including in particular the case of a cost matrix derived from a
metric but using a cost exponent $p <1$. 

Beyond this obvious modification it seems worthwhile to extend the methodology
of directional
differentiability in conjunction with a delta-method to other functionals
related to optimal transport, e.g. entropically regularized \citep{Cuturi2013a}
or sliced Wasserstein distances \citep{bonneel_sliced_2015}.
This would require a careful investigation of the analytical properties of these
quantities similar to classical results for the Wasserstein distance.
\section*{Acknowledgment}
The authors gratefully acknowledge support by the DFG Research Training Group
2088 Project A1.
They would like to thank L. D\"umbgen, A. Hein, S. Huckemann, C. Gottschlich 
D. Schuhmacher and R. Schultz for helpful discussions and C. Tameling for
careful reading of the manuscript. 


\scriptsize
\bibliographystyle{apalike}
\bibliography{finitewasser,R}
\normalsize

\appendix
\section{An alternative representation of the limiting distribution}
\label{sec:altalt}
We give a second representation of the limiting distribution under the
alternative $\br\neq \bs$. The random part of the limiting
distribution \eqref{eq:two_sample_alt} is the linear program
\[
  \max_{(\bu, \bv)\in\Phi^*_p(\br, \bs)} \sqrt{\lambda} \scp{\bm{G}}{\bu}  + \sqrt{ 1-
  \lambda} \scp{\bm{H}}{\bv}.
\]
With the representation \eqref{eq:sets} of  $\Phi^*_p(\br, \bs)$ we obtain the dual linear
program
\begin{align*}
  \min\quad &zW_p^p(\br, \bs) + \sum_{x,x'\in \X} w_{x,x'} d^p(x,x') \\
  \text{s.t.}\quad & 
  \bw\geq 0, z\in \RR \\
  & \sum_{x'\in\X} w_{x,x'} +z r_x = G_x \\
  & \sum_{x\in\X} w_{x,x'} +z s_x = H_x \\
\end{align*}
Note that the constraints can only be satisfied if both $\sqrt{\lambda}\bm{G} - z\br$ and
$\sqrt{1-\lambda}\bm{H} - z\bs$ have only non-negative entries and $z\leq 0$. In this case the second term in
the objective function is clearly minimized by $-z\bw^*$, with $\bw^*$ an optimal transport
plan between these two measures $\br - \sqrt{\lambda}\bm{G} / z$ and
$\bs -\sqrt{1-\lambda}\bm{H} / z$ and the second term of the objective function is equal to
$-z W_p^p(\br - \sqrt{\lambda}\bm{G} / z, \bs - \sqrt{1-\lambda}\bm{H} / z)$.

To write this more compactly let us slightly extend our notation. For $\br, \bs
\in\RR^\X$ with $\sum_x r_x = \sum_x s_x = 1$ let 
\[
  \tilde{W}_p^p(\br, \bs) = 
  \begin{cases}
    W_p^p(\br, \bs) & \text{if } \br, \bs \geq 0; \\
    \infty & \text{else.}
  \end{cases}
\]
With this we can thus write the random variable in the limiting distribution
\eqref{eq:two_sample_alt} as the one-dimensional non-linear optimization
problem
\begin{equation}
  \frac{1}{p} W_p^{1-p}(\br, \bs)\min_{z\geq 0} z \left\{\tilde{W}_p^p(\br +
  \sqrt{\lambda}\bm{G} / z, \bs + \sqrt{1-\lambda}\bm{H} / z) - W_p^p(\br, \bs)\right\}.
  \label{eq:altalt}
\end{equation}
\section{Bootstrap}
\label{sub:boot}
In this section we discuss the bootstrap for the Wasserstein distance. In
addressing the usual measurability issues that arise in the formulation of
consistency for the bootstrap, we follow \cite{van_der_vaart_weak_1996}. We denote
by $\brh_n^*$ and $\bsh_m^*$ some bootstrapped versions of $\brh_n$ and
$\bsh_m$. More precisely, let $\brh_n^*$ a measurable function of $X_1
,\dots,X_n$ and random weights $W_1, \dots , W_n$, independent of the data and
analogously for $\bsh_m^*$. This setting is general enough to include many common
bootstrapping schemes.  We say that, with the assumptions and notation of
Theorem \ref{thm:full}, the bootstrap is consistent if 
 the limiting
distribution of 
\[
  \rho_{n,m} \left\{ (\brh_n, \bsh_m) - (\br, \bs)
  \right\} \Rightarrow (\sqrt{\lambda}\bm{G}, \sqrt{1-\lambda}\bm{H}) 
\]
is consistently estimated by the law of 
\[
  \rho_{n,m} \left\{ (\brh_n^*, \bsh_m^*) - (\brh_n,
  \bsh_m) \right\}.
\]
To make this precise, we define for $A\subset\RR^d$, with $d\in\NN$, the set of
bounded Lipschitz-1 functions
\[
  \BL1 (A) = \left\{ f:A\ra\RR: \sup_{x\in A}|f(x)|\leq 1, \quad |f(x_1) -
  f(x_2)| \leq ||x_1 - x_2|| \right\},
\]
where $||\cdot||$ is the Euclidean norm.
We say that the bootstrap versions $(\brh_n^*, \bsh_m^*)$ are consistent if
\begin{equation}
  \begin{split}
    \sup_{f\in\BL1(\RR^\X \times \RR^\X)} |E\left[ f(\rho_{n,m}\left\{ (\brh_n^*,
    \bsh_m^*) - (\brh_n, \bsh_m) \right\}) | X_1, \dots, X_n, Y_1, \dots,Y_m \right] \\
    - E\left[ f( (\sqrt{\lambda}\bm{G}, \sqrt{1-\lambda}\bm{H})) \right]|
  \end{split}
  \label{eq:boot_cons}
\end{equation}
converges to zero in probability.
\paragraph{Bootstrap for directionally differentiable functions}
The most straightforward way to bootstrap $W_p^p(\brh_n,
\bsh_m)$ is to simply plug-in $\brh_n^*$ and $\bsh_m^*$. That is, trying to
approximate the limiting distribution of $\rho_{n,m}\left\{W_p^p(\brh_n, \bsh_m) -
W_p^p(\br, \bs)\right\}$ by the law of 
\begin{equation}
  \rho_{n,m}\left\{ W_p^p(\brh_n^*, \bsh_m^*) - W_p^p(\brh_n, \bsh_m) \right\}
  \label{eq:plug_in_boot}
\end{equation}
conditional on the data.  While for functions that are Hadamard differentiable
this approach yields a consistent bootstrap (e.g. \cite{gill_non-_1989,
van_der_vaart_weak_1996}), it has
been pointed out by \cite{dumbgen_nondifferentiable_1993} and
more recently by \cite{fang_inference_2014} that this is in
general not
true for functions that are only directionally Hadamard differentiable. In
particular the plug-in approach fails for the Wasserstein distance. 

For the Wasserstein distance there are two alternatives. First,
\cite{dumbgen_nondifferentiable_1993} already pointed out that re-sampling fewer
than $n$ (or $m$, respectively) observations yield a consistent bootstrap.
Second, \cite{fang_inference_2014} propose to plug-in $\rho_{n,m}\left\{
(\brh_n^*, \bsh_m^*) - (\brh_n, \bsh_m) \right\}$ into the derivative of the
function.

Recall from Section \ref{sec:asymp_distr} that 
\begin{equation}
  \phi_p:\RR^N\times\RR^N \ra \RR, \quad \phi_p(\bh_1 ,\bh_2)
  = \max_{\bu\in\Phi^*_p}
   \scp{\bu}{\bh_2 - \bh_1}
  \label{eq:der_phi}
\end{equation}
is the directional Hadamard derivative of $(\br,\bs)\mapsto W_p^p(\br,\bs)$ at
$\br=\bs$.
With this notation, the following Theorem summarizes the implications of the
results of \cite{dumbgen_nondifferentiable_1993} and \cite{fang_inference_2014}
for the Wasserstein distance.
\begin{thm}[Prop. 2 of \cite{dumbgen_nondifferentiable_1993} and Thms. 3.2 and 3.3 of \cite{fang_inference_2014}]
  \label{thm:boot}
  Under the assumptions of Theorem \ref{thm:full} let $\brh_n^*$ and $\bsh_m^*$
  be consistent bootstrap versions of $\brh_n$ and $\bsh_m$, that is,
  \eqref{eq:boot_cons} converges to zero in probability. Then, 
  \begin{enumerate}
    \item the plug-in bootstrap \eqref{eq:plug_in_boot} is \textit{not}
      consistent, that is,  
      \begin{equation*}
        \begin{split}
          \sup_{f\in\BL1(\RR)} E\left[ f(\rho_{n,m} \left\{W_p^p(\brh_n^*, \bsh_m^*) -
          W_p^p(\brh_n, \bsh_m) \right\}) | X_1, \dots X_n,Y_1,\dots,Y_m \right] \\
       - E[f(\rho_{n,m}\left\{ W_p^p(\brh_n, \bsh_m) - W_p^p(\br, \bs) \right\})]
     \end{split}
     \end{equation*}
     does \textit{not} converge to zero in probability.
    \item Under the null hypothesis $\br = \bs$, the derivative plug-in
      \begin{equation}
        \phi_p(\rho_{n,m}\left\{ (\brh_n^*,\bsh_m^*) - (\brh_n, \bsh_m) \right\})
        \label{eq:der_boot}
      \end{equation}
      is consistent, that is
      \begin{equation*}
        \begin{split}
          \sup_{f\in\BL1(\RR)}E\left[ f(\phi_p(\rho_{n,m}\left\{ (\brh_n^*,
            \bsh_m^*) - (\brh_n, \bsh_m)
          \right\})) | X_1, \dots,X_n,Y_1, \dots, Y_m  \right] \\
          - E\left[ f\left(\rho_{n,m}\left\{ W_p^p(\brh_n, \bsh_m) - W_p^p(\br,
          \bs) \right\}\right) \right]
        \end{split}
      \end{equation*}
      converges to zero in probability.
  \end{enumerate}
\end{thm}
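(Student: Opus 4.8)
The plan is to read Theorem~\ref{thm:boot} as the specialization of the general bootstrap theory for Hadamard directionally differentiable maps---Prop.~2 of \cite{dumbgen_nondifferentiable_1993} and Thms.~3.2--3.3 of \cite{fang_inference_2014}---to the functional $(\br,\bs)\mapsto W_p^p(\br,\bs)$, so that the only genuine work is to check the hypotheses of those results and to decide whether the relevant directional derivative is linear. Theorem~\ref{thm:derivative_Wasserstein} already supplies Hadamard directional differentiability, with derivative the support function of the dual-solution set; under $\br=\bs$ this is the map $\phi_p(\bh_1,\bh_2)=\max_{\bu\in\Phi^*_p}\scp{\bu}{\bh_2-\bh_1}=\sigma_{\Phi^*_p}(\bh_2-\bh_1)$ of \eqref{eq:der_phi}. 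The standing assumption \eqref{eq:boot_cons} provides the bootstrap consistency of the underlying multinomial process. The general dichotomy then reads: the naive plug-in \eqref{eq:plug_in_boot} is consistent iff this derivative is linear on the support of the limit, whereas substituting the bootstrapped process into the (here \emph{known}) derivative is always consistent.

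For part~1 I would invoke Thm.~3.2 of \cite{fang_inference_2014} (equivalently Prop.~2 of \cite{dumbgen_nondifferentiable_1993}): under the stated hypotheses \eqref{eq:plug_in_boot} is consistent if and only if $\phi_p$ is linear on the support of $(\sqrt{\lambda}\bm{G},\sqrt{1-\lambda}\bm{H})$. The substance is therefore to show $\phi_p$ is \emph{not} linear. Since the range of the multinomial covariance $\Sigma(\br)$ is exactly the zero-sum subspace (its rows sum to zero and it has rank $N-1$ because all $r_x>0$), the increment $\bh_2-\bh_1$ ranges over all zero-sum vectors. On such directions adding a constant to $\bu$ leaves $\scp{\bu}{\bh}$ unchanged, so $\sigma_{\Phi^*_p}$ is the support function of the bounded polytope $P=\Phi^*_p\cap\{u_{x_1}=0\}$, which is full-dimensional in the $(N-1)$-dimensional slice (it contains a neighborhood of $\bu=0$, using $d^p(x,x')>0$ for $x\neq x'$). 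A positively homogeneous function is linear iff $\sigma_{\Phi^*_p}(\bh)+\sigma_{\Phi^*_p}(-\bh)=0$ for all $\bh$; but this sum is the width of $P$ in direction $\bh$, which is strictly positive for every nonzero zero-sum $\bh$ because $\scp{\cdot}{\bh}$ is non-constant on the full-dimensional $P$. Hence $\phi_p$ is genuinely non-linear on the full support of the limit, and inconsistency follows (the same argument, with $\Phi^*_p(\br,\bs)$ in place of $\Phi^*_p$, covers the alternative).

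For part~2 I would apply Thm.~3.3 of \cite{fang_inference_2014}, whose hypotheses simplify here because $\phi_p$ depends only on the metric $d$ and not on the unknown $\br=\bs$, so no estimation of the derivative is required. The steps are: first, \eqref{eq:boot_cons} gives conditional convergence (in probability, in the bounded-Lipschitz metric) of $\rho_{n,m}\{(\brh_n^*,\bsh_m^*)-(\brh_n,\bsh_m)\}$ to $(\sqrt{\lambda}\bm{G},\sqrt{1-\lambda}\bm{H})$; second, $\phi_p$ restricted to the zero-sum subspace---on which the centred bootstrap increments live---is Lipschitz (a finite maximum of linear functionals over the bounded polytope $P$), so a continuous-mapping argument for the bootstrap transfers this to conditional convergence of $\phi_p(\rho_{n,m}\{(\brh_n^*,\bsh_m^*)-(\brh_n,\bsh_m)\})$ to $\phi_p(\sqrt{\lambda}\bm{G},\sqrt{1-\lambda}\bm{H})$; third, the delta-method of Theorem~\ref{thm:delta_method} shows the target $\rho_{n,m}\{W_p^p(\brh_n,\bsh_m)-W_p^p(\br,\bs)\}$ converges unconditionally to the \emph{same} limit, using as in the proof of Theorem~\ref{thm:full}(c) that $\bm{G}\simD\bm{H}$ collapses the $\lambda$-dependence to $\max_{\bu\in\Phi^*_p}\scp{\bm{G}}{\bu}$. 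Matching the conditional and unconditional limits yields the claimed consistency.

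I expect the main obstacle to be the non-linearity verification in part~1: one must ensure the non-linearity of $\phi_p$ is visible on the \emph{support} of the limiting Gaussian rather than merely on all of $\RR^\X\times\RR^\X$, which is exactly what the zero-sum width computation---together with the non-degeneracy of $\Phi^*_p$ modulo constants, valid as soon as $N\ge 2$ and $d$ is a genuine metric---delivers. The rest is bookkeeping: casting the conditional weak-convergence statements in the bounded-Lipschitz form \eqref{eq:boot_cons} and respecting the measurability conventions of \cite{van_der_vaart_weak_1996}.
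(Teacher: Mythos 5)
Your proposal is correct and takes essentially the same route as the paper: Theorem \ref{thm:boot} is stated there as a direct consequence of Prop.~2 of \cite{dumbgen_nondifferentiable_1993} and Thms.~3.2--3.3 of \cite{fang_inference_2014}, applied to $(\br,\bs)\mapsto W_p^p(\br,\bs)$ whose directional Hadamard derivative comes from Theorem \ref{thm:derivative_Wasserstein}, with no further proof given. If anything, you supply more than the paper does, since your width-of-polytope verification that $\phi_p$ is non-linear on the zero-sum support of the Gaussian limit, and the observation that $\phi_p$ is finite and Lipschitz precisely on that subspace, are exactly the hypothesis checks that the paper leaves implicit when it asserts the Wasserstein functional ``belongs to this class of functionals.''
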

\section{Proofs}
\subsection{Proof of Theorem \ref{thm:derivative_Wasserstein}}
  By  {\cite[Ch. 3, Thm. 3.1]{Gal1997}} the function $(\br,\bs)\mapsto
  W_p^p(\br,\bs)$ is directionally differentiable with derivative
  \eqref{eq:derivative} in the sense of G\^ateaux, that is, the limit
  \eqref{eq:def_hadamard} exists for a fixed $\bh$ and not a sequence
  $\bh_n\ra\bh$ (see e.g. \cite{Shapiro1990}). To see that this is also a
  directional
  derivative in the Hadamard sense \eqref{eq:def_hadamard} it suffices 
  {\cite[Prop. 3.5]{Shapiro1990}} to show that  $(\br,\bs)\mapsto
  W_p^p(\br,\bs)$ is locally Lipschitz. That is, we need to show that for
  $\br,\br',\bs,\bs'\in\mathcal{P}_\X$ 
  \begin{equation*}
    |W_p^p(\br,\bs) - W_p^p(\br',\bs')| \leq C ||(\br,\bs) - (\br',\bs')||,
  \end{equation*}
  for some constant $C>0$ and some (and hence all) norm $||\cdot||$ on
  $\RR^N\times\RR^N$. Exploiting symmetry, it suffices to show that 
  \begin{equation*}
    W_p^p(\br, \bs) - W_p^p(\br, \bs') \leq C ||\bs - \bs'||
  \end{equation*}
  for some constant $C>0$ and some norm $||\cdot||$. To this end, we employ an
  argument similar to that used to prove the triangle inequality for the
  Wasserstein distance (see e.g. {\cite[p. 94]{villani_optimal_2008}}). Indeed,
  by the gluing Lemma {\cite[Ch. 1]{villani_optimal_2008}} there exist random
  variables $X_1,X_2,X_3$ with marginal distributions $\br,\bs$ and $\bs'$,
  respectively, such that $E[d^p(X_1, X_3)] = W_p^p(\br, \bs')$ and
  $E[d(X_2,X_3)] = W_1(\bs, \bs')$. Then, since $(X_1, X_2)$ has marginals
  $\br$ and $\bs$, we have 
  \begin{align*}
    W_p^p(\br, \bs) - W_p^p(\br, \bs') &\leq E\left[ d^p(X_1, X_2) -
    d^p(X_1, X_3) \right] \\
    & \leq p\diam(\X)^{p-1} E\left[ |d(X_1, X_2) - d(X_1, X_3)| \right]\\
    & \leq p\diam(\X)^{p-1} E\left[ d(X_2, X_3) \right] = p\diam(\X)^{p-1}
    W_1(\bs, \bs') \\ 
    & \leq p\diam(\X)^{p} ||\bs - \bs'||_1, 
  \end{align*}
  where the last inequality follows from {\cite[Thm.
  6.15]{villani_optimal_2008}}. This completes the proof.
  
\subsection{Proof of Theorem \ref{thm:trees}}
\paragraph{Simplify the set of dual solutions $\Phi^*_p$} As a first step, we
rewrite the set of dual solutions $\Phi^*_p$ given in
\eqref{eq:sets} in our tree notation as
\begin{equation}
  \Phi^*_p = \left\{ \bu\in\RR^\X: u_x - u_{x'} \leq
d_\T(x,x')^p, \quad x,x'\in \X \right\}.
\label{eq:Phi*_tree}
  \end{equation}
  The key observation is that in the condition $u_x - u_{x'}\leq d_\T(x,x')^p$ we
  do not need to consider all pairs of vertices $x,x'\in \X$, but only those which
  are joined by an edge. To see this, assume that only the latter condition holds.
  Let $x,x'\in \X$ arbitrary and $x = x_1,
  \dots , x_l = x'$ the sequence of vertices defining the unique path joining
  $x$ and $x'$, such that $(x_j,x_{j+1})\in E$ for $j=1,\dots,n-1$. Then
  \[
    u_x - u_{x'} = \sum_{j=1}^{n-1} (u_{x_j} - u_{x_{j+1}}) \leq \sum_{j=1}^{n-1}
    d_\T(x_j, x_{j+1})^p \leq d_\T(x,x')^p,
  \]
  such that the condition is satisfied for all $x,x'\in \X$. Noting that if two
  vertices are joined by an edge than one has to be the parent of the other, we
  can write the set of dual solutions as
  \begin{equation}
    \Phi^*_p = \left\{ \bu\in \RR^\X  : |u_x -
      u_{\parent(x)}| \leq d_\T(x,\parent(x))^p ,\quad x\in \X  \right\}.
      \label{eq:Phi*_trees_simple}
    \end{equation}

    \paragraph{Rewrite the target function} 
    We define linear operators $S_\T, D_\T : \RR^\X \ra \RR^\X$ by
    \[
      (D_\T v)_x = 
      \begin{cases}
        v_x - v_{\parent(x)} & x\neq\root(\T) \\
        v_{\root(\T)} & x=\root(\T).
      \end{cases}, \quad
      (S_\T u)_x = \sum_{x'\in\children(x)} u_{x'}.
    \]
    \begin{lem}
      \label{lem:S_D}
      For $\bu,\bv \in\RR^\X$ we have $\scp{\bu}{\bv} = \scp{S_\T \bu}{D_\T \bv}$.
    \end{lem}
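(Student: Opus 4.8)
The plan is to read the identity as a discrete summation-by-parts (Abel summation) on the tree: $D_\T$ takes differences along edges while $S_\T$ aggregates a vector over entire subtrees, and the claim is that pairing the two exactly reconstructs the standard inner product. Rather than argue abstractly, I would expand $\scp{S_\T\bu}{D_\T\bv}$ directly and interchange the order of summation, letting the tree combinatorics do the work.

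Concretely, I would write
\[
  \scp{S_\T\bu}{D_\T\bv} = \sum_{x\in\X}\Bigl(\sum_{x'\in\children(x)} u_{x'}\Bigr)(D_\T\bv)_x
\]
and swap the two sums, collecting the coefficient of each fixed $u_{x'}$. Since $x'\in\children(x)$ holds exactly when $x$ is an ancestor of $x'$ (equivalently, when $x$ lies on the unique path from $x'$ to $\root(\T)$), that coefficient is $\sum_{x:\,x'\in\children(x)}(D_\T\bv)_x$, a sum of the difference operator taken over precisely the ancestors of $x'$.

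The crux is then a telescoping computation along this ancestor path. Writing it as $x'=y_0,\,y_1=\parent(y_0),\dots,y_k=\root(\T)$, the non-root terms contribute $(D_\T\bv)_{y_j}=v_{y_j}-v_{y_{j+1}}$ and telescope to $v_{x'}-v_{\root(\T)}$, while the root term contributes $(D_\T\bv)_{\root(\T)}=v_{\root(\T)}$; the two combine to leave exactly $v_{x'}$. Summing the resulting $\sum_{x'} u_{x'} v_{x'}$ over all $x'\in\X$ then gives $\scp{\bu}{\bv}$, as claimed.

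The main — and essentially the only — point requiring care is the boundary term at the root. The operator $D_\T$ is defined to equal $v_{\root(\T)}$ there rather than $v_{\root(\T)}-v_{\parent(\root(\T))}$, and it is precisely this convention that makes the telescoping close up to $v_{x'}$ instead of $v_{x'}-v_{\root(\T)}$. I would therefore isolate the root term and note that $\root(\T)$ is an ancestor of every vertex, so it appears in every inner sum and supplies exactly the missing $v_{\root(\T)}$.
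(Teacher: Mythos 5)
Your argument is correct, and it takes a recognizably different route from the paper's. You fix each $x'$ and compute its coefficient $\sum_{x:\,x'\in\children(x)}(D_\T\bv)_x$, observing that $\{x : x'\in\children(x)\}$ is exactly the ancestor path from $x'$ to $\root(\T)$, and then telescope along that path, with the root convention $(D_\T\bv)_{\root(\T)}=v_{\root(\T)}$ closing the telescope to leave precisely $v_{x'}$ --- the boundary issue you rightly single out as the only delicate point. The paper instead keeps $x$ as the outer variable, expands $(D_\T\bv)_x=v_x-v_{\parent(x)}$, splits off the root term, and eliminates the subtracted parent-sum by a purely local regrouping: it sums over $y=\parent(x)$ and uses that $\children(y)\setminus\{y\}$ is the disjoint union of $\children(x)$ over the immediate children $x\in\children^1(y)$, so that the difference of the two triple sums collapses to the diagonal $\sum_{x} u_x v_x$. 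Your version is the global summation-by-parts (Abel summation) argument, telescoping along whole root paths, and it makes both the mechanism and the role of the root convention more transparent; the paper's version trades path-by-path telescoping for a single one-step identity on children sets, which is terser but leaves the fate of the boundary term implicit. Both proofs are complete and equally rigorous.
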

    \begin{proof}
      We compute
      \begin{align*}
        \scp{S_\T \bu}{D_\T \bv} &= \sum_{x\in \X} (S_\T \bu)_x (D_\T \bv)_x \\
        & = \sum_{x\in \X\setminus \left\{ \root(\T) \right\}}
        \sum_{x'\in\children(x)} (v_x - v_{\parent(x)}) u_{x'} \\
        &\quad + \sum_{x'\in\children(\root(\T))} v_{\root(\T)}u_{x'} \\ 
        & = \sum_{x\in \X} \sum_{x'\in\children(x)} v_x u_{x'} \\
        &\quad - \sum_{x\in \X\setminus \left\{ \root(\T)
        \right\}}\sum_{x'\in\children(x)} v_{\parent(x)} u_{x'}\\ 
        & = \sum_{x\in \X} u_x v_x,
      \end{align*}
      which proves the Lemma. To see how the last line follows let
      $\children^1(x)$ be the set of immediate predecessors of $x$, that is
      children of $x$ that are connected to $x$ by an edge. Then we can write
      the second term in the second to last line above as
      \begin{align*}
        \sum_{x\in \X\setminus \{ \root(\T) \}}
        \sum_{x'\in\children(x)} v_{\parent(x)} u_{x'}
        & = \sum_{y\in\X} \sum_{x\in\children^1(y)}\sum_{x'\in\children(x)} v_y
        u_{x'} \\
        &= \sum_{y\in\X}\sum_{x'\in\children(y)\setminus \{y\}} v_y u_{x'}
      \end{align*}
      and the claim follows.
    \end{proof}
    If $\bu\in\Phi^*_p$, as given in \eqref{eq:Phi*_trees_simple}, we
    have for $x\neq\root(\T)$ that 
    \[
      |(D_\T \bu)_x| = |u_x - u_{\parent(x)}| \leq d_\T(x,\parent(x))^p.
    \]
    With these two observations and Lemma \ref{lem:S_D}, we get for
    $\bm{G}\sim\mathcal{N}(0,\Sigma(\br))$ and
    $\bu\in\Phi^*_p$ that 
    \begin{equation}
      \scp{\bm{G}}{\bu} = \scp{S_\T \bm{G}}{D_\T \bu} 
      \leq \sum_{\root(\T)\neq x \in \X} |(S_\T \bm{G})_x| d_\T(x,\parent(x))^p.
      \label{eq:tree_bound_on_max}
    \end{equation}
    Therefore, $\max_{\bu\in\Phi^*_p} \scp{\bm{G}}{\bu}$ is bounded by
    $\sum_{\root(\T)\neq x \in \X} |(S_\T \bm{G})_x|
    d_\T(x,\parent(x))^p$. Since $D_\T$ is an isomorphism, we can define a vector
    $\bv\in\RR^\X$ by 
    \[
      (D_\T\bv)_x = \sgn( (S_\T \bm{G})_x) d_\T(x,\parent(x))^p.
    \]
    From \eqref{eq:Phi*_trees_simple} we see that $\bv\in\Phi^*_p$ and Lemma
    \ref{lem:S_D}
    shows that $\scp{\bm{G}}{\bv}$ attains the upper bound in
    \eqref{eq:tree_bound_on_max}. 
    This concludes the proof.

    \subsection{Proof of Corollary \ref{cor:samworth}}
    In order to use Theorem \ref{thm:trees} we define the tree $\T$ with vertices $\left\{
    x_1,\dots, x_N \right\}$, edges $E=\left\{ (x_j, x_{j+1}),j =1,\dots,N-1
  \right\}$ and $\root(\T)=x_N$. Then, if $\bm{G}\sim\mathcal{N}(0,\Sigma(\br))$,
  we have that $\left\{ (S_\T\bm{G})_j \right\}_{j=1,\dots, N}$ is a Gaussian
  vector
  such that for $i\leq j$
  \begin{align*}
    & \cov( (S_\T\bm{G})_i, (S_\T\bm{G})_j)  = \sum_{\substack{k\leq i \\ l\leq
    j}} E\left[ G_k G_l \right]  
    = \sum_{k\leq i} r_k(1-r_k) - \sum_{\substack{k\leq i\\l\leq j\\
    k\neq l}} r_kr_l \\ 
    &= \bar{r}_i - \sum_{\substack{k\leq i \\ l\leq i}}r_kr_l -
    \sum_{\substack{k\leq i \\ i<l\leq j}} r_k r_l 
    = \bar{r}_i - \bar{r}_i^2 - \bar{r}_i(\bar{r}_j - \bar{r}_i)) 
    = \bar{r}_i - \bar{r}_i \bar{r}_j.
  \end{align*}
  Therefore, we have that for a standard Brownian bridge $B$
  \[
    S_\T\bm{G} \sim (B(\bar{r}_1), \dots , B(\bar{r}_N)).
  \]
  Together with $d(x_j, \parent(x_j))=(x_{j+1} - x_j)^2$, and
  \eqref{eq:weak_conv_trees} this proves the Corollary.

\end{document}